\title{Self-Stabilizing MIS Computation in the Beeping Model} 
\author{George {Giakkoupis}}{Inria, Univ Rennes, CNRS, IRISA, Rennes, France}{george.giakkoupis@inria.fr}{https://orcid.org/0000-0002-8023-4485}{Supported by Agence Nationale de
la Recherche (ANR), under project ByBloS (ANR-20-CE25-0002).}
\author{Volker {Turau}}{Institute of Telematics, Hamburg University of Technology, Hamburg, Germany}{turau@tuhh.de}{https://orcid.org/0000-0001-9964-8816}{}
\author{Isabella {Ziccardi}}{Bocconi University, BIDSA, Milan, Italy}{isabella.ziccardi@unibocconi.it}{https://orcid.org/0000-0002-1550-3677}{Supported by the European Research Council (ERC), under the European Union’s Horizon 2020 research and innovation program (grant agreement No. 834861).}
\authorrunning{G. Giakkoupis, V. Turau and I. Ziccardi} 
\keywords{Maximal Independent Set, Self-Stabilization, Beeping Model} 
\crefname{equation}{}{}
\crefname{claim}{Claim}{Claims}
\newcommand{\ellmax}{\ell_{\max}}
\newcommand{\false}[0]{\ensuremath{\mathit{false}}\xspace}
\newcommand{\true}[0]{\ensuremath{\mathit{true}}\xspace}
\newcommand{\mis}{\mathit{I}}
\newcommand{\Promi}{\mathit{PM}}
\newcommand{\Expc}[1]{\mathbf{E}\left[#1\right]}
\newcommand{\Prob}{\ensuremath{\operatorname{\mathbf{Pr}}}}
\newcommand{\Probc}[1]{\mathbf{Pr}\left[#1\right]}
\newcommand{\sigmafree}{\sigma_{\text{out}}}
\newcommand{\sigmastable}{\sigma_{\text{in}}}
\newcommand{\maxd}{10}
\newcommand\redout{\bgroup\markoverwith
{\textcolor{red}{\rule[0.5ex]{2pt}{0.8pt}}}\ULon}
\begin{document}

\maketitle

\begin{abstract}
We consider self-stabilizing algorithms to compute a Maximal Independent Set (MIS) in the extremely weak \emph{beeping} communication model.
The model consists of an anonymous network with synchronous rounds. 
In each round, each vertex can optionally transmit a signal to all its neighbors (beep). 
After the transmission of a signal, each vertex can only differentiate between no signal received, or at least one signal received. 
We also consider an extension of this model where vertices can transmit signals through two distinguishable beeping channels.
We assume that vertices have some knowledge about the topology of the network.

We revisit the not self-stabilizing algorithm proposed by Jeavons, Scott, and Xu (2013), which computes an MIS in the beeping model. 
We enhance this algorithm to be self-stabilizing, and explore three different variants, which differ in the knowledge about the topology available to the vertices and the number of beeping channels.
In the first variant, every vertex knows an upper bound on the maximum degree $\Delta$ of the graph. For this case, we prove that the proposed self-stabilizing version maintains the same run-time as the original algorithm, i.e., it stabilizes after $O(\log n)$ rounds w.h.p.\ on any $n$-vertex graph.
In the second variant, each vertex only knows an upper bound on its own degree. For this case, we prove that the algorithm stabilizes after $O(\log n\cdot \log \log n)$ rounds on any $n$-vertex graph, w.h.p. In the third variant, we consider the model with two beeping channels, where every vertex knows an upper bound of the maximum degree of the nodes in the $1$-hop neighborhood. We prove that this variant stabilizes w.h.p.\ after $O(\log n)$ rounds.
\end{abstract} 

\section{Introduction}

The Maximal Independent Set (MIS) problem has a central role in the areas of parallel and distributed computing. In a graph $G=(V,E)$, an MIS is a subset of vertices $ I \subseteq V$ where no two vertices in $I$ are adjacent, and it is maximal with respect to inclusion. Recognized for its importance in the field of distributed computing since the early 1980s \cite{Luby86,AlonBI86}, the computation of an MIS serves as a foundational subroutine in various algorithms in wireless networks, routing, and clustering \cite{Peleg2000}. The interest in the MIS problem has recently extended to biological networks, with observations of processes similar to the MIS elections in the development of the fly's nervous system \cite{AfekABHBB11}.

While distributed MIS algorithms are well-explored in the standard synchronous message-passing models like LOCAL, CONGEST, and CONGESTED-CLIQUE 
\cite{Peleg2000,Linial1987,Lotker2003,Ghaffari16,GhaffariGR21,Faour0GKR23,BalliuBHORS21,Ghaffari19},
recently the MIS selection was considered also within weaker communication frameworks \cite{MoscibrodaW05a,AfekABCHK13,EmekK21}. 
Indeed, novel distributed communication models, inspired by scenarios in biological cellular networks, wireless sensor networks and networks with sub-microprocessor devices, were defined. 
The Stone Age model, introduced by Emek and Wattenhofer, provides an abstraction of a network of randomized finite state machines that communicate with their neighbors using a fixed message alphabet based on a weak communication scheme \cite{EmekW13}. 
Another related model, which is the one we consider in this paper, is the full-duplex beeping model\footnote{This model is also called the beeping model with collision detection.}, where a network of anonymous processors and synchronous rounds is considered \cite{CornejoK10}.  
In each round, each vertex has the option to either broadcast a signal – a beep – to all its neighbors or to remain silent. Subsequently,  each vertex can determine whether it received any signals or if all its neighbors remained silent. This does not allow a vertex to differentiate which vertex emitted the signal, nor the number of signals received.
We notice that a variation of this model can be defined where, instead of a single type of signal, a constant number of distinct signals exist, and the vertices can distinguish between the types of signals received.
The beeping model finds motivation in scenarios such as wireless sensor networks or biological systems, where organisms can only detect proteins transmitted by neighboring entities \cite{AfekABCHK13}.
The problem of computing an MIS was already considered in the full-duplex beeping model \cite{JeavonsS016,Ghaffari17,AfekABCHK13} and in the Stone Age model \cite{emek2020,EmekK21,EmekW13}.

In both biological and wireless systems, another notable trait is their capability for self-recovery. This ability is also essential in distributed and large-scale systems, which must be able to effectively manage faults.  Self-stabilizing algorithms are designed to ensure that systems can recover from any state and eventually stabilize into a valid state,  maintaining stability as long as faults are absent \cite{Dijkstra74,Dolev2000}. Indeed, self-stabilizing algorithms are
guaranteed to converge from any initial configuration. However, only a few self-stabilizing MIS algorithms have been proposed for the aforementioned weak communication models \cite{AfekABCHK13,EmekK21,Giakkoupis2023}.
In the full-duplex beeping model, Afek et al.\ in \cite{AfekABCHK13} introduced a self-stabilizing algorithm that converges to an MIS in $O(\log^2 N \log n)$ rounds with high probability (w.h.p.), if all vertices know an upper bound $N$ on the network's size  $n$. 
They also established a polynomial lower bound for the MIS in a similar model. This model includes an adversary able to select the wake-up time slots for the vertices. Because of the presence of the adversary, the lower bound of \cite{AfekABCHK13} is not applicable in the setting of this paper.
In the full-duplex beeping model, a constant-state algorithm was proposed in \cite{Giakkoupis2023}, which stabilizes in poly-logarithmic rounds w.h.p., albeit being efficient only for some graph families. 
Meanwhile, Emek et al.\ \cite{EmekK21} devised an 
algorithm for a simplified version of the Stone Age model that is slightly stronger than the beeping communication model, which stabilizes in $O((D+\log n)\log n) $ rounds w.h.p. on any $D$-bounded diameter graph, where $D$ is considered a fixed parameter. 
However, in this context, it would be desirable to relinquish the assumption that vertices possess global information about the network’s structure.

Algorithms that do not require any knowledge of the network's topology were also proposed for the beeping model, but they strongly rely on the assumption that, at the beginning of the algorithm, the vertices are in the same fixed initial state, and hence they are not self-stabilizing. One algorithm was proposed by Afek et al.\ \cite{AfekABCHK13} for the full-duplex beeping model, which stabilizes in $O(\log^2 n)$ rounds w.h.p., without requiring vertex knowledge of the network's topology. 
Later, Jeavons et al.\ \cite{JeavonsS016} improved this result by proposing an algorithm for the same model, capable of computing an MIS in any $n$-vertex graph in $O(\log n)$ rounds w.h.p., without requiring any vertex knowledge\footnote{Ghaffari provided a refined analysis for Jeavons at al.'s algorithm in \cite{Ghaffari17}.}. 
Notice that these algorithms are not self-stabilizing because they also rely on the presence of phases of two rounds, implying a synchronization of the vertices modulo two.

\subsection{Our Contribution}

In this paper, we propose a self-stabilizing algorithm for computing the MIS in the full-duplex beeping model, aiming for a stabilization time of $O(\log n)$ with minimal vertex knowledge about network topology. 

We consider the standard fault model, used in most self-stabilizing algorithms \cite{Dolev2000}, where the state of each node is stored in RAM and data in RAM can be corrupted by transients faults (e.g., external events), while the code is stored in ROM and cannot be corrupted. We consider a fault-free execution after a RAM corruption. An algorithm $\mathcal{A}$ is self-stabilizing with termination time $T$ if, after a transient fault within $T$ fault-free steps, it reaches a legal state. This is equivalent to asking that the algorithm $\mathcal{A}$ reaches a legal level after $T$ fault-free steps, starting from an arbitrary state, i.e., without a fixed initialization.

The starting point of our work is Jeavons' algorithm in \cite{JeavonsS016}, which is non-self-stabilizing and converges within $O(\log n)$ rounds. We propose two variants that achieve self-stabilization and efficiency across all graph sizes. Our algorithms rely on each vertex's ability to compute a quantity $\ellmax(v)$, which may require access to some information, such as the maximum degree of the graph. 
The first variant assumes that vertices know an upper bound on the maximum degree $\Delta$ and stabilizes in $O(\log n)$ rounds, while the second variant assumes that each vertex knows an upper bound on its own degree and stabilizes in $O(\log n \log \log n)$ rounds. Additionally, we present a third algorithm for the extended beeping model with two channels, stabilizing in $O (\log n)$ time if vertices know an upper bound on the maximum degree among the 1-hop neighborhood.
In summary, our contributions yield three algorithms for computing MIS in the beeping model, each highlighting different scenarios based on varying levels of vertex knowledge and beeping channels. 
Formally, we prove the following theorem.

\begin{theorem}
    Let $G$ be a $n$-vertex graph.
    \begin{enumerate}
        \item
        If each vertex knows the same upper bound on the maximum degree of $G$, which is at most polynomial in $n$, 
        then an MIS can be computed in the beeping model, in a self-stabilizing manner, within $O(\log n)$ rounds w.h.p.
        
        \item If each vertex knows an upper bound  on its own degree, which is at most polynomial in $n$, then an MIS can be computed in the beeping model, in a self-stabilizing manner, within $O(\log n \log \log n)$ rounds w.h.p.
        
        \item If each vertex knows an upper bound on the maximum degree of all vertices in its $1$-hop neighborhood, which is at most polynomial in $n$, then an MIS can be computed, in the beeping model with two channels, in a self-stabilizing manner, within $O(\log n)$ rounds w.h.p. 
    \end{enumerate}
\end{theorem}
It remains an open question whether a fast, self-stabilizing algorithm computing an MIS in the beeping model can be designed so that no information about the network topology is required to be known by the vertices.

\section{The Algorithm}
\label{sec:motivation}
We assume the full-duplex beeping communication model and the starting point for our algorithm is the beeping, randomized algorithm of Jeavons et al.\ in \cite{JeavonsS016}.
Each vertex $v$ is associated with an adaptive probability $p_t(v)$ of beeping in round $t$, and the algorithm works in phases, each consisting of two rounds. In the first round of each
phase, each vertex $v$ beeps with probability $p_t(v)$ and, if $v$ beeps
and all its neighbors are silent, then $v$ joins the MIS. In the second round of each phase, vertices that joined the MIS beep and neighboring
vertices hearing a beep become non-MIS vertices. Then, the newly joined MIS and
non-MIS vertices remain silent for the rest of the algorithm. The crucial
point leading to a $O(\log n)$ global round complexity with high
probability, is that active vertices adapt in each phase the beeping
probability, initially $p_1(v) = 1/2$ for each vertex $v$. The value of
$p_{t+1}(v)$ is decreased whenever neighboring vertices beep and is
increased otherwise. In particular $p_{t+1}(v)=p_t(v)/2$ in the former
case and $p_{t+1}(v)= \min \{2p_t(v),1/2\}$ otherwise.
The rationale of this behavior is twofold: to reduce the probability of neighboring
vertices attempting to concurrently join the MIS, and to increase the
probability of making an attempt to join the MIS in case of no
concurrent attempts to do so.

This algorithm is not self-stabilizing for two reasons. First, it works just if at the beginning of the algorithm the probability of beeping of each vertex $v$ is $p_1(v)=1/2$, and the analysis of the convergence time relies on that. Second, the presence of phases with two rounds requires that the vertices are synchronized modulo two. These reasons are also the main obstacle to making it
self-stabilizing.
Moreover, in self-stabilizing algorithms, vertices must be able to detect errors, e.g., when a fault forces a vertex to change its state from MIS to non-MIS, and hence stable vertices cannot be silent after they stabilized.

In order to design a self-stabilizing MIS algorithm for the full-duplex beeping
model, achieving a $O(\log n)$ global round complexity w.h.p., we dispense with the idea of phases and we change the
details of updating the beeping probabilities $p_t(v)$ to overcome
the mentioned issues. While keeping the idea of increasing and decreasing the
beeping probability depending on whether a beep was received, we refine
this behavior in a significant way. As before, when a vertex $v$ beeps
while hearing no beeps at the same time it attempts to join the MIS.
To signal this to neighboring vertices, vertex $v$ keeps beeping, i.e., it sets its
beeping probability $p_t(v)$ to $1$. If such a vertex hears a beep in one of the
following rounds, it does not immediately give up its attempt to join the MIS, but it keeps beeping with probability $1$ for some fixed number rounds. Only after hearing a beep in a certain number of rounds, the vertex changes
its behavior back to halving its beeping probability in every round it hears a beep. Furthermore, if the beeping probability decreases over a fixed threshold, the vertex sets its beeping probability to $0$ and stops beeping.
The complete code is shown in \cref{alg:jsx-ss}.

\begin{algorithm}[ht]
    
    \state{$\ell \in\{-\ellmax(v),\ldots, \ellmax(v)\}$}
    
    \BlankLine
    
    \InEachRound{$t=1,2,\ldots$}{

        \uIf{$\ell < \ellmax(v)$}
            {\label{jsx:rand}
            $beep \,\gets\, \true$ with probability 
            $\min\left\{2^{-\ell},1\right\}$            and $beep \,\gets\, \false$ otherwise\;} 
        \lElse{$beep \,\gets\, \false$}

        \BlankLine

        \lIf{$beep$}{send signal to all neighbors \label{jsx:send}}
        receive any signals sent by neighbors\;
        \BlankLine

        \BlankLine

        \uIf {any signal received}
            {\label{jsx:incr}
            $\ell \,\gets\, \min\{\ell+1,\,\ellmax(v)\}$\;}
        \uElseIf {$beep$}
            {\label{jsx:reset}
            $\ell \,\gets\, -\ellmax(v)$\;}
        \lElse
            {\label{jsx:decr}%
            $\ell \,\gets\, \max\{\ell-1, 1\}$}
    }
    
    \caption{Self-stabilizing version of Jeavons, Scott and Xu's algorithm \cite{JeavonsS016}}
    \label{alg:jsx-ss}
\end{algorithm}

To implement the described behavior, each vertex maintains an integral state
variable $\ell$, which we call \emph{level}. The value of $\ell$ for vertex $v$ is in the range
$-\ellmax(v),\ldots, \ellmax(v)$, where $\ellmax(v)$ is a fixed value that depends on the vertex's knowledge of some graph parameters.  
 We will see that this value has a strong influence on the analysis of the stabilization
time. The value of $\ell_t(v)$ of vertex $v$ in round $t$ implies the beeping probability $p_t(v)$ of $v$ similar to an activation function in an artificial neural network (see \cref{fig:activation}). 
As long as $\ell_t(v) \le 0$ vertex $v$ beeps and $p_t(v)=1$, if $\ell_t(v)=\ellmax(v)$ it stops beeping and $p_t(v)=0$, otherwise $p_t(v)=2^{-\ell_t(v)}$.

\begin{figure}[h]
  \begin{center}
  \includegraphics[scale=0.95]{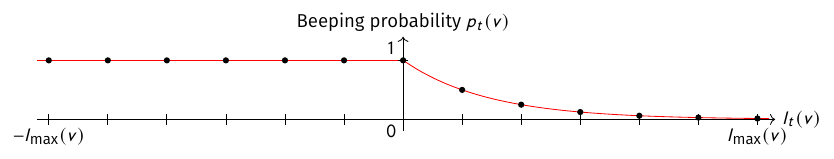}
\end{center}
  \caption{Beeping probability $p_t(v)$ of $v$ based on value of $\ell_t(v)$.}\label{fig:activation}
\end{figure}

In each round $t$ each vertex $v$ updates the value of $\ell_t(v)$ as
follows. If $v$ hears a beep then its level increases:
$\ell_{t+1}(v) = \min \{\ell_t(v) +1,\ellmax(v)\}$. Otherwise,
$\ell_{t+1}(v) = \max \{\ell_t(v) -1,1\}$ unless $v$ was beeping in
round $t$, in this case $\ell_{t+1}(v) = -\ellmax(v)$. Note that the only way the level of a vertex $v$ can decrease below $0$ is if $v$ beeps without beeping neighbors. 
We observe that \cref{alg:jsx-ss} is self-stabilizing if its convergence is guaranteed for every initial value of the levels. 

The update rules of the algorithm guarantee that, once the level's value of a vertex $v$ is $-\ellmax(v)$ and each of $v$'s
neighbors $w$ has level's value $\ellmax(w)$, then $v$ is such that $p_t(v)=1$ and all the neighbors $u$ of $v$ are such that $p_t(u)=0$. This guarantees that $v$ and its neighbors will not change their status as long as no faults occur, and hence they are stable. In this case, $v$ will be
a MIS vertex and the neighbors become non-MIS vertices. Also, this strategy
allows all vertices to detect faults and react accordingly. But foremost,
it allows to determine the stabilization time.

The result and the analysis of the algorithm depend on the values $\ellmax(v)$ of each vertex $v$, which in turn depends on the knowledge available to each vertex $v$. We state the detailed results in the following theorems, and notice that we denote with $\deg_2(v)= \max_{u \in N(v)\cup\{v\}}\deg(u)$ the maximum degree in the $1$-hop neighborhood of $v$.

\begin{theorem}
\label{thm:jeavons} For any $n$-vertex graph $G$, 
\cref{alg:jsx-ss} computes an MIS, starting from an arbitrary configuration, within $O(\log n)$ rounds w.h.p., provided that $\ellmax(v) = \ellmax \in [\log \Delta +c_1, c_2 \log n]$  for each vertex $v$ and constants $c_1 \geq 15$ and $c_2 > 0$.
\end{theorem}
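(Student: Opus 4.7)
My plan is to decompose the convergence analysis into three parts — a burn-in, a progress phase that adapts the Jeavons-Scott-Xu analysis, and a local commit phase that locks in MIS decisions — and then take a union bound over all vertices.

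\textbf{Burn-in.} First I would show that after $O(\ellmax) = O(\log n)$ rounds the influence of the adversarial initial levels is essentially wiped out. Any vertex $v$ whose initial level lies in the ``deterministic-beep'' range $\ell \le 0$ either beeps alone at some round during this window (earning a genuine reset to $-\ellmax$) or has a neighbor that also beeps, in which case both levels climb in lock-step by one per round. Symmetrically, any vertex stuck at a high level with no beeping neighbor drifts down by one per round toward $\ell = 1$. After this window every non-committed vertex has a level whose distribution is dominated by that of a fresh execution of the algorithm.

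\textbf{Progress.} Next I would adapt the potential-function argument of Jeavons, Scott, and Xu as refined by Ghaffari~\cite{Ghaffari17}. The relevant potential is a sum over the still-undecided edges of $p_t(u)\,p_t(v)$ (or a closely related per-vertex quantity tracking how close each vertex is to a commit). One shows that, in every round, either some vertex beeps alone in its closed neighborhood with conditional probability $\Omega(1)$, or the potential drops multiplicatively by a constant factor. Because the level range $[-\ellmax,\ellmax]$ spans only $O(\log n)$ distinct values, $O(\log n)$ such rounds suffice to push the potential below $n^{-c}$ for any constant $c$, so that w.h.p.\ every vertex has witnessed a beep-alone event in its closed neighborhood.

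\textbf{Commit.} Once $v$ beeps alone in round $t_0$ and resets to $\ell=-\ellmax$, the configuration on $\{v\}\cup N(v)$ should become stable within $O(\ellmax)$ further rounds. During this stretch $v$ keeps beeping, so every neighbor $u$ receives a beep every round and its level climbs monotonically toward $\ellmax$, after which $u$ falls permanently silent. The crucial quantitative question is how far $v$'s own level climbs back up: $v$'s level rises by one in every round in which at least one neighbor beeps, and a Chernoff-type computation shows the expected total number of such rounds is $\Theta(\log\Delta)+O(1)$. This is precisely where the hypothesis $\ellmax\ge \log\Delta + c_1$ with $c_1 \ge 15$ bites — it guarantees that $v$ stays in the beep-with-probability-$1$ regime throughout the commit, so the decision does not leak and the configuration becomes self-maintaining, matching the MIS stability condition described beneath \cref{alg:jsx-ss}.

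\textbf{The main obstacle} is the interaction between these phases, in particular the case in which two neighboring vertices $v$ and $w$ try to commit at nearly the same time. Both land at $\ell=-\ellmax$, beep simultaneously in every subsequent round, and therefore climb in lock-step back into the probabilistic regime at level $1$, where a Jeavons-style tie-break must finally decide between them. Bounding the per-pair failure probability of this tie-break by $o(1/n^2)$ and union-bounding over all $O(n\Delta)$ adjacent pairs is what forces the extra additive slack $c_1 \ge 15$ beyond $\log\Delta$ in the definition of $\ellmax$, and it is the place where I expect the most delicate conditioning arguments.
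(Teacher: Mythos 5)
Your three-phase decomposition (burn-in, progress, commit) matches the skeleton of the paper's argument (\cref{fact:no_adjacent_negative} for the burn-in, \cref{lem:lower_bound_platinum} for the progress, \cref{lem:char_platinum}(a) for the commit), but two of your key steps rest on a misreading of the dynamics. The scenario you single out as ``the main obstacle'' --- two adjacent vertices $v,w$ both resetting to $-\ellmax$ and climbing back in lock-step until a tie-break at level $1$ --- cannot occur. The only way a level becomes negative is by beeping while hearing no beep; two adjacent vertices therefore cannot reset in the same round (each would hear the other), and once $v$ sits at a non-positive level it beeps deterministically every round, so no neighbor of $v$ can beep alone while $v$ is prominent. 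This is exactly the invariant of \cref{fact:no_adjacent_negative} ($\ell_t(v)>0$ or $\mu_t(v)>0$ after the first $\max_w\ellmax(w)$ rounds), and it is what makes the commit phase \emph{deterministic} in the uniform-$\ellmax$ case: after $v$ resets, it beeps for $\ellmax$ consecutive rounds, every neighbor (all at positive level) climbs to $\ellmax$ and falls silent, and $v$ then beeps alone again before its own level can exceed $0$. No Chernoff bound on ``how far $v$ climbs back up'' is needed, and there is no per-pair failure probability to union-bound over $O(n\Delta)$ pairs.

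Consequently your account of where the hypothesis $\ellmax \ge \log\Delta + c_1$ with $c_1\ge 15$ is used is also wrong. It is not needed for the commit or for any tie-break; it is needed in the \emph{progress} phase, to guarantee $\eta_t(v)=\sum_{u\in N(v)\setminus S_t}2^{-\ellmax}\le \Delta\cdot 2^{-\log\Delta-15}\le 0.0001$. This quantity controls the re-activation of neighbors parked at the top level $\ellmax$ (a vertex at level $\ellmax$ that hears no beep drops to $\ellmax-1$ and resumes beeping with probability $2^{-\ellmax+1}$), which is precisely the self-stabilization difficulty absent from the Jeavons--Scott--Xu/Ghaffari setting that you import wholesale. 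Relatedly, the paper's \cref{lem:lower_bound_platinum} cannot simply reuse the Ghaffari potential argument: the drift analysis of $d_t(v)$ must be carried out conditioned on no platinum round having yet occurred and must absorb the $\eta_t(v)$ re-activation term. Without addressing these two points your progress phase does not close, and your stated motivation for the constant $15$ does not correspond to any step that actually needs it.
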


\begin{theorem}
\label{thm:jeavons_degree} For any $n$-vertex graph $G$,
\cref{alg:jsx-ss} computes an MIS, starting from an arbitrary configuration, within $O(\log n \cdot \log \log n)$ rounds w.h.p., provided that $\ellmax(v) \in [2 \log \deg(v) +c_1, c_2 \log n]$  for each vertex $v$ and constants $c_1 \geq 30$ and $c_2 > 0$.
\end{theorem}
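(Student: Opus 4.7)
The plan is to adapt the proof of Theorem \ref{thm:jeavons} to handle the heterogeneity of $\ellmax$ values across adjacent vertices. In Theorem \ref{thm:jeavons} all vertices share a common $\ellmax \ge \log\Delta + c_1$, so their beeping probabilities can simultaneously sweep from $1$ down to roughly $1/\Delta$. Here a low-degree vertex $v$ can have $\ellmax(v) = O(1)$ while a high-degree neighbor has $\ellmax = \Theta(\log n)$, so the symmetric sweep argument must be replaced by a layered one. I would partition the vertices into $k = O(\log\log n)$ classes $C_i = \{v : \deg(v)\in[2^{2^i},\,2^{2^{i+1}})\}$. This double-exponential grouping has two advantages: only $O(\log\log n)$ classes cover all degrees up to $n$, and inside a class any two $\ellmax$ values differ by at most a constant factor, so the argument of Theorem \ref{thm:jeavons} transfers almost verbatim. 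Crucially, the factor $2$ in $\ellmax(v) = 2\log\deg(v)+c_1$ gives $\ellmax(v) \ge 2\cdot 2^i + c_1 \ge \log(\deg(v)^2)+c_1$ for $v\in C_i$, so $v$ can handle a ``virtual maximum degree'' as large as $2^{2^{i+1}}$, i.e.\ the maximum degree of any neighbor in $C_i$ or below.

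The argument then proceeds as a cascade from the highest class down to the lowest. I would first show that within $O(\log n)$ rounds the vertices of the top class $C_{k-1}$ — whose level range is already $\Omega(\log n)$ — stabilize, treating beeps from lower-class neighbors as bounded noise; this step is essentially the proof of Theorem \ref{thm:jeavons} restricted to $C_{k-1}$. Conditioning on $C_{k-1}$'s stable configuration, the still-unstabilized vertices of $C_{k-2}$ live in an induced subgraph whose effective maximum degree is at most $2^{2^{k-1}}$, which exactly matches their $\ellmax$ budget, so another $O(\log n)$ rounds suffice; and so on down to $C_0$. Summing the $O(\log\log n)$ phases gives the claimed $O(\log n \cdot \log\log n)$ bound.

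The principal obstacle will be formalizing the cascade in the presence of the self-stabilizing dynamics. Once a higher class has reached a locally stable configuration, a stabilized MIS vertex $u$ beeps every round while a non-MIS neighbor of $u$ should remain pinned at level $\ellmax(\cdot)$. But $u$'s unstabilized lower-class neighbors may still emit random beeps that raise $u$'s own level, and we must show that this drift cannot push $u$ out of the MIS; conversely, we need that the beeps of a stabilized MIS vertex quickly pin the levels of adjacent non-MIS vertices to $\ellmax$, even while lower classes are still converging. I expect the tightest part of the argument to be showing that the $2\log\deg(v)$ slack absorbs this cross-class interference uniformly across all layers — the constant $c_1\ge 30$, which is exactly twice the value required in Theorem \ref{thm:jeavons}, is presumably what provides the required buffer.
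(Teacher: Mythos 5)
Your layered decomposition into $O(\log\log n)$ degree classes is indeed the skeleton of the paper's proof (it uses $V_i=\{v:\ellmax(v)\in[2^i,2^{i+1}]\}$, which is the same double-exponential grouping of degrees), but your cascade runs in the \emph{wrong direction}, and this is a genuine gap rather than a presentational choice. The engine of the whole analysis, \cref{lem:lower_bound_platinum}, requires $\eta_t(v)=\sum_{u\in N(v)\setminus S_t}2^{-\ellmax(u)}\le 0.0001$; for a top-class vertex $v$ this sum is taken over up to $\deg(v)$ \emph{unstabilized low-degree} neighbors $u$, each contributing $2^{-\ellmax(u)}=2^{-O(\log\deg(u))}$, which can be a constant. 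A star with center $v$ of degree $n-1$ and leaves of degree $1$ gives $\eta_t(v)=\Theta(n)\cdot 2^{-(2+c_1)}$, so the hypothesis fails catastrophically and the lower-class beeps are in no sense ``bounded noise'': they keep $d_t(v)$ polynomially large, the golden-round argument never fires, and $v$ essentially never becomes prominent on its own — it stabilizes only as a by-product of a leaf joining the MIS. For this reason the paper's induction stabilizes classes in \emph{increasing} order of $\ellmax$: once all vertices with $\ellmax\le 2^i$ are in $S_t$, every unstabilized neighbor $u$ of a $V_{i+1}$-vertex $v$ satisfies $\ellmax(u)\ge\ellmax(v)/2\ge\log\deg(v)+15$, which is exactly what makes $\eta_t(v)\le 2^{-15}$ and lets the key lemmas apply.

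A second gap is the claim that inside a class ``the argument of Theorem~\ref{thm:jeavons} transfers almost verbatim.'' That argument relies on $\ellmax$ being \emph{identical} across vertices, so that $\eta'_t(u)=0$ and a single platinum round deterministically yields stabilization within $\ellmax$ further rounds. As soon as $\ellmax$ varies — even by a factor of $2$ within a class, and a fortiori because a vertex's unstabilized neighbors may sit in higher classes — a prominent vertex $u$ can be knocked back to positive level by a neighbor $w$ with $\ellmax(w)>\ellmax(u)$ that has not yet reached its own maximum level. One must then control a whole sequence of platinum episodes that fail before one succeeds; the paper does this in a dedicated lemma (\cref{lem:stabilization}) by showing the failure probabilities multiply as $\prod_i\eta'_t(v_i)\le 2^{-\sum_i\ellmax(v_i)/2}$, forcing $\sum_i\ellmax(v_i)=O(\log n)$, and then bounding the total elapsed time by concentration of sums of geometric variables. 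Your proposal correctly flags ``cross-class interference'' as the hard part but offers no mechanism for it; reversing the cascade and supplying an argument of the above type is what is missing.
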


\begin{corollary}
There exists a variant of \cref{alg:jsx-ss} for the beeping model with two beeping channels such that, for any $n$-vertex graph $G$, it computes an MIS, starting from an arbitrary configuration, within $O(\log n)$ rounds w.h.p., provided that $\ellmax(v) \in [2\log \deg_2(v)+c_1, c_2\log n]$,  for each vertex $v$ and any constants $c_1 \geq 15$ and $c_2 >0$.
\label{thm:two_beeping_channels}
\end{corollary}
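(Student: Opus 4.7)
The plan is to augment \cref{alg:jsx-ss} with a second beeping channel dedicated to signalling successful MIS membership. Concretely, every vertex $v$ that has reached level $\ell = -\ellmax(v)$ beeps on the second channel in every subsequent round, and any vertex hearing a signal on the second channel immediately sets $\ell$ to $\ellmax(v)$ (and hence stops beeping on the first channel). The first channel continues to carry the probabilistic beeping defined in \cref{alg:jsx-ss}. This modification ensures that, as soon as a vertex commits to the MIS, all its neighbors are silenced within one round, eliminating the $O(\ellmax(u))$ relaxation delay that a neighbor $u$ would otherwise incur while its level climbs back up to $\ellmax(u)$ through channel~1 alone.

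With this variant in place, I would adapt the proof of \cref{thm:jeavons} to the vertex-local setting. The key progress arguments of that proof, namely that each vertex's level behaves like a biased random walk and that an isolated beep succeeds with constant probability, remain valid when restricted to a $1$-hop neighborhood. The lower bound $\ellmax(v) \ge 2\log\deg_2(v)+c_1$ then plays the role analogous to $\ellmax \ge \log\Delta+c_1$ in \cref{thm:jeavons}, with $\deg_2(v)$ replacing $\Delta$ as the \emph{local} maximum degree seen by $v$. The extra factor of two provides the slack needed to absorb the discrepancy between $\ellmax(v)$ and $\ellmax(u)$ for neighboring $u,v$ (which can differ by a polynomial factor in degree), ensuring that union bounds over a neighborhood still go through cleanly and that the constant $c_1 \ge 15$ inherited from \cref{thm:jeavons} suffices.

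The main obstacle will be recovering the $O(\log n)$ bound without the $\log\log n$ loss of \cref{thm:jeavons_degree}. That loss arises because, in the single-channel setting, low-degree vertices must wait for their high-degree neighbors to finish relaxing, which induces an inductive resolution across degree classes. The second channel short-circuits exactly this chain: a single committed vertex silences its entire $1$-hop neighborhood in one round, so the analysis of each neighborhood effectively decouples from the rest of the graph and can be carried out at the rate of \cref{thm:jeavons} using the uniform bound $\ellmax(v) \ge \log\deg_2(v)+c_1$ locally. A union bound over the $n$ vertices then yields the claimed $O(\log n)$-round stabilization w.h.p.
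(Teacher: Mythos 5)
Your proposal is correct and follows essentially the same route as the paper: the second channel is used to broadcast committed MIS membership so that a single platinum round (one lone successful beep in $N^+(v)$) immediately stabilizes $v$'s neighborhood, after which the platinum-round lower bound (\cref{lem:lower_bound_platinum}, which only needs $\eta_t(v)\le 0.0001$, guaranteed by $\ellmax(v)\ge 2\log\deg_2(v)+c_1$) and a union bound give $O(\log n)$ w.h.p. The only cosmetic difference is that the paper's variant restricts $\ell$ to $[0,\ellmax(v)]$ and triggers the second channel at $\ell=0$, whereas you keep the full range and trigger it at $\ell=-\ellmax(v)$; the analysis is unaffected.
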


To execute \cref{alg:jsx-ss}, each vertex $v$ only needs to know the value of $\ell_{max}(v)$. As stated in the three results above, in order to get the time bounds, the value of $\ellmax(v)$ must be in $O(\log n)$ for each $v$. We remark that to satisfy this requirement it is unnecessary that the value $n$ is known by the vertices. If, for example, $\ell_{max} = \log \Delta + c_1$, then the requirement of \cref{thm:jeavons} is satisfied, and this only requires each node to know a loose upper bound on $\Delta$.

\subparagraph*{Roadmap.}

The rest of the paper is organized as follows. \cref{sec:stabilization_jeavons} contains notations, preliminary definitions, the statement of two key lemmas, \cref{lem:lower_bound_platinum,lem:char_platinum}, and an analysis outline. In \cref{sec:warm_up} we give the proof of \cref{thm:jeavons}, and in \cref{sec:thm:jeavons_degree} the proof of \cref{thm:jeavons_degree}. The proofs of key \cref{lem:lower_bound_platinum,lem:char_platinum}
can be found in \cref{sec:proof_main_lemmas}. 
The description of the algorithm using two beeping channels and its analysis (the proof of \cref{thm:two_beeping_channels}) are deferred to \cref{sec:two_beeping_channels}.
We conclude in \cref{sec:conclusion} with a summary and some open problems.

\section{Definitions and Analysis Outline}
\label{sec:stabilization_jeavons}

Let $G=(V,E)$ be a graph with $n$ vertices. For each vertex $v \in V$, $N(v)$ denotes the set of $v$'s neighbors in $G$, and $\deg(v)=|N(v)|$ is the degree of $v$. Also, $N^+(v)  = N(v) \cup \{v\}$ is the set of $v$'s neighbors and $v$ itself. Let $\deg_2(v) = \max_{u \in N^+(v)}\deg(u)$ the maximum degree of all the vertices in $N^+(v)$.

We introduce a few random variables that are used to describe the random process generated by the execution of \cref{alg:jsx-ss}. If we denote with $\ell_t(v)$ the level of vertex $v \in V$ at the beginning of round $t \geq 1$, the random execution of the algorithm at time $t$ depends only on the values $\{\ell_t(v)\}_{v \in V}$. We denote with $\mathcal{F}_t$ the filtration of the process until step $t$, which in particular gives us the values $\{\ell_t(v)\}_{v \in V}$. 

We notice that in \cref{alg:jsx-ss} a vertex $v \in V$ is \emph{stable} and permanently added to the  MIS prior to round $t$ if $\ell_t(v)=-\ellmax(v)$ and, for all $u \in N(v)$, $\ell_t(u)=\ellmax(u)$. Hence, if we define \[\mu_t(v)=\min_{u \in N(v)}\frac{\ell_t(u)}{\ellmax(u)},\]
which has value in $[-1,1]$, we have that the set of vertices that have been added to the final MIS set before round $t$ is defined by
\[
    \mis_t 
    = 
    \{v\in V \colon \ell_t(v) = -\ellmax(v) \,\land\, \mu_t(v) = 1\}.
\]
Moreover, the set of all stable vertices at the beginning of round $t$ consists of the vertices in the MIS and their neighbors, so we define $S_t = I_t \cup N(I_t)$. We notice that the set of stable vertices is increasing in $t$, i.e., for each $t \geq 1$ we have that $S_{t}\subseteq S_{t+1}$.
For any vertex $v \in V$, we denote with $p_t(v)$ the probability that $v$ beeps during round $t$, which is
\[
    p_t(v) =
    \begin{cases}
        1 & \text{if }\ell_t(v)\le 0 \\
        2^{-\ell_t(v)} &\text{if }0 <\ell_t(v)<\ellmax(v)\\
        0 &\text{if }\ell_t(v)=\ellmax(v).
    \end{cases}  
\]
We also denote with $b_t(v)$ a Bernoulli random variable which takes value $1$ if $v$ beeps in round $t$, i.e., $\Expc{b_t(v)}=p_t(v)$. We define $B_t(v) = \sum_{u\in N(v)} b_t(u)$ as the number of $v$'s neighbors that beep in round $t$ and $d_t(v)=\Expc{B_t(v)}=\sum_{u \in N(v)}p_t(u)$ as the expected number of beeping neighbors of $v$ in round $t$. Note that if $B_t(v)=0$ then $\mu_t(u)> 0$ for all neighbors $u$ of $v$. 

\begin{lemma}
\label{fact:no_adjacent_negative}
   Let $t > \max_{w \in V}\ellmax(w)$. Then $\ell_t(v) > 0$ or $\mu_t(v) > 0$ for any $v \in V$.
\end{lemma}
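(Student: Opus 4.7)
I plan to prove this by contradiction. Suppose that at some round $t > \max_{w \in V}\ellmax(w)$ there exists a vertex $v$ and a neighbor $u \in N(v)$ such that $\ell_t(v) \leq 0$ and $\ell_t(u) \leq 0$; that is, both disjuncts of the lemma fail. The plan is to follow the pair $(v, u)$ jointly backwards in time and force their levels below the lower bound $-\ellmax$, contradicting the admissible range of the state variable.

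The backbone of the argument is the following local dichotomy, which I would establish first: if $\ell_s(w) \leq 0$ for some vertex $w$ and round $s \geq 2$, then $w$ must have beeped in round $s-1$. This is checked by inspecting the three update branches. The ``no signal, no beep'' branch produces $\ell_s(w) = \max(\ell_{s-1}(w) - 1, 1) \geq 1$ and is thus incompatible with $\ell_s(w) \leq 0$. The ``signal received'' branch forces $\ell_{s-1}(w) + 1 \leq 0$, hence $\ell_{s-1}(w) \leq -1 < 0$, which makes $w$ beep with probability~$1$. The ``beep alone'' reset branch requires $w$ to beep by definition.

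With this in hand, I would prove by induction on $k = 0, 1, 2, \ldots$ (while $t - k \geq 1$) that $\ell_{t-k}(v) \leq -k$ and $\ell_{t-k}(u) \leq -k$. The base case $k = 0$ is the hypothesis. For the inductive step, the inductive hypothesis forces $\ell_{t-k}(v), \ell_{t-k}(u) \leq -k \leq 0$, so the local dichotomy implies that both $v$ and $u$ beeped at round $t-k-1$. Since $u \in N(v)$, each received the other's signal at that round, which rules out the ``beep alone'' reset update for either vertex. Both updates were therefore ``signal received'', so $\ell_{t-k}(v) = \ell_{t-k-1}(v) + 1$ (and analogously for $u$), which combined with $\ell_{t-k}(v) \leq -k$ gives $\ell_{t-k-1}(v) \leq -(k+1)$, and symmetrically for $u$. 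Pushing the induction to $k = t-1$ forces $\ell_1(v) \leq -(t-1)$, contradicting $\ell_1(v) \geq -\ellmax(v) \geq -\max_w \ellmax(w)$ once $t$ exceeds $\max_w \ellmax(w)$.

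The delicate step I expect to be the main obstacle is the simultaneous exclusion of the reset branch for both $v$ and $u$ in the inductive step: it is precisely the adjacency of the pair, together with the local dichotomy forcing each of them to beep, that rules out either being the lone beeper at round $t-k-1$ and pins both updates to ``signal received''. This exclusion is what drives the unit decrease per step and allows the induction to exhaust the range of the level variable over the $t > \max_w \ellmax(w)$ available rounds.
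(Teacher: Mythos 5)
Your proof is correct, and it takes a genuinely different route from the paper's. The paper argues forward in time: it first shows the condition ``$\ell_t(v)>0$ or $\mu_t(v)>0$'' is invariant once attained, and then shows it is attained within roughly $\max_w\ellmax(w)$ rounds because, as long as it fails, $v$ beeps and some neighbor of $v$ beeps, so every level in $N^+(v)$ increments each round. You instead run a single backward induction from a hypothetical violation: two adjacent vertices with non-positive levels must each have beeped in the previous round (your local dichotomy), hence each heard the other, which pins both updates to the increment branch and forces both levels to have been strictly smaller one round earlier; iterating exhausts the admissible range of $\ell$. The two arguments are essentially contrapositives of the same observation, but yours is more economical: it dispenses with the forward-invariance half entirely and only ever tracks the fixed pair $(v,u)$, whereas the paper reasons (somewhat loosely, e.g.\ asserting that \emph{all} of $v$'s neighbors beep) about the whole neighborhood at once. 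One caveat, which the lemma statement itself shares: at the exact boundary $t=\max_w\ellmax(w)+1$ your induction only yields $\ell_1(v)\le -\max_w\ellmax(w)$, which is an attainable initial value (take $u,v$ adjacent with $\ell_1(u)=\ell_1(v)=-\ellmax(u)=-\ellmax(v)=-\max_w\ellmax(w)$; then both levels equal $0$ at round $\max_w\ellmax(w)+1$), so your contradiction genuinely kicks in only for $t>\max_w\ellmax(w)+1$, not ``once $t$ exceeds $\max_w\ellmax(w)$'' as you state. The paper's own proof has the same off-by-one (it only concludes $t_0\le\max_w\ellmax(w)+1$), and nothing downstream depends on the exact constant, so this does not affect the substance of your argument.
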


\begin{proof}[Proof of \cref{fact:no_adjacent_negative}]
    Let $t_0$ be the first round such that $\ell_{t_0}(v) > 0$ or $\mu_{t_0}(v) > 0$. First, we will prove that this condition continues to hold for all rounds $t\ge t_0$. Then, we will prove that $t_0 \leq \max_{w \in V}\ellmax(w)+1$.

    Consider any round $t \geq t_0$ and assume that $\ell_t(v)>0$ or $\mu_t(v)>0$. This implies that $\mu_{t+1}(v)>0$ or $\ell_{t+1}(v)>0$. Indeed, assume that $\mu_{t}(v) \le 0$. Then $\ell_{t}(v) > 0$ and at least one neighbor of $v$ beeps in round $t$. Thus, $\ell_{t+1}(v)=\min\left\{\ell_{t}(v)+1,\ellmax(v)\right\}\ge \ell_{t}(v)>0$, i.e., the condition of the lemma holds in round $t+1$.     
    Next consider the case that $\mu_{t}(v) > 0$. If $v$ beeps in round $t$ then all neighbors increase their value for $\ell$, i.e., $\mu_{t+1}(v) \ge \mu_{t}(v)> 0$. If $v$ does not beep in round $t$  then $\ell_{t}(v)>0$. Indeed, if no neighbor of $v$ beeps then   
    $\ell_{t+1}(v)=\max\left\{\ell_{t}(v)-1,1\right\}>0$, and if at least one neighbor of $v$ beeps then $\ell_{t+1}(v)=\min\left\{\ell_{t}(v)+1,\ellmax(v)\right\}\ge \ell_{t}(v) >0$, i.e., the condition of the lemma holds in round $t+1$. 

    Assume that $\ell_{0}(v)\le 0$ and $\mu_{0}(v)\le 0$. Then, in the first round all vertices in $N^+(v)$ beep. Hence, all these vertices increment their level by $1$, i.e., $\ell_{1}(v)=\ell_{0}(v)+1$ and $\mu_1(v) = \min_{u \in N(v)}\frac{\ell_0(u)+1}{\ellmax(u)}$. Since $-\ellmax(u) \le \ell_{0}(u)$ for all vertices $u\in V$, there exists $t_0\le \max_{u \in N^+(v)}\ellmax(u)+1$, such that $\ell_{t_0}(v) > 0$ or $\mu_{t_0}(v) > 0$. This completes the proof.
\end{proof}

\cref{fact:no_adjacent_negative} implies that in order to prove that our algorithm stabilizes within $O(\log n)$ rounds we can assume that $\ell_{t}(v) > 0$ or $\mu_{t}(v) > 0$ for all rounds $t\ge 0$. This is because $\max_{w \in V}\ellmax(w) \in O(\log n)$. Hence, we can ignore the initial $\max_{w \in V} \ellmax(w)$ rounds and start our analysis after those rounds. In particular, $\ell_t(u)\leq 0$ implies $\mu_t(u)>0$.

We define a vertex to be \emph{prominent} if it has negative or zero level, and a round to be \emph{platinum} for some vertex $v$ if some of $v$'s neighbors is prominent.

\begin{definition}[Prominent Vertices and Platinum Rounds]
A vertex $v \in V$ is \emph{prominent} in round $t$ if $\ell_t(v) \leq 0$. The set of prominent vertices in round $t$ is denoted with $\Promi_t$. Moreover, we say that round $t$ is \emph{a platinum round of vertex $v$} if  $N^+(v)$ contains a prominent vertex $u$, i.e., $u \in N^+(v) \cap \Promi_t$. We denote with $P_{t,k}(v)$ the number of platinum rounds of vertex $v$ during  rounds $\{t, \dots, t+k\}$.
 \label{def:platinum}
\end{definition}

Clearly, $I_t\subseteq \Promi_t$. We notice that, since we assume $t > \max_{w \in V}\ellmax(w)$, then \cref{fact:no_adjacent_negative} implies that for each platinum round $t$ of $v$ there exists $u\in N^+(v)$ such that $\ell_t(u) \leq 0$ and $\mu_t(u)>0$,  i.e., the probability that none of $u$'s neighbors beeps in round $t$ is positive. 
Remember, the only possibility for the level of vertex $u$ to become less or equal to $0$ is when $u$ beeps while no neighbor of $u$ is beeping. 
This directly leads to the next lemma.
\begin{lemma}    \label{fact:no_beeping_neighbors}
If $t>\max_{w \in V}\ellmax(w)$ is a platinum round for vertex $v$ there exists a vertex $u \in N^+(v)$ and a round $t'$ with $t-\ellmax(u)\leq t'\leq t$ in which $u$ was beeping without beeping neighbors and $\ell_{t'+1}(u)=-\ellmax(u)$.
\end{lemma}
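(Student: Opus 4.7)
The plan is as follows. Since $t$ is a platinum round for $v$, pick any $u \in N^+(v) \cap \Promi_t$, so that $\ell_t(u) \le 0$. I would trace $\ell_{\cdot}(u)$ backwards in time and locate the most recent round $t' \le t$ in which $u$ beeps while all its neighbors are silent---equivalently, in which the reset branch (\cref{jsx:reset}) is executed at $u$, producing $\ell_{t'+1}(u) = -\ellmax(u)$. The proof then amounts to showing (a) such a round $t'$ must exist, and (b) it cannot lie more than roughly $\ellmax(u)$ rounds in the past.

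First I would record two simple monotonicity facts about the dynamics of $\ell_{\cdot}(u)$ read directly off the update rule. (i) The ``no beep, no signal'' branch (\cref{jsx:decr}) clamps $\ell$ at $1$ from below, while the ``signal received'' branch (\cref{jsx:incr}) can only increase $\ell$; hence the reset branch is the \emph{only} mechanism that can take $\ell_{\cdot}(u)$ from $\ge 1$ to $\le 0$, and once $\ell_{\cdot}(u) \ge 1$ it remains $\ge 1$ until a reset occurs. (ii) Whenever $\ell_{\cdot}(u) \le 0$ the vertex $u$ beeps deterministically, so in any such round either $u$ resets, or some neighbor beeps and $\ell_{\cdot}(u)$ increases by exactly $1$.

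Now let $t'$ be the largest round $\le t$ in which the reset branch fires at $u$. For existence, observe that if no reset had ever occurred up to round $t$, then by (ii) the level would strictly grow whenever it is $\le 0$, reaching a value $> 0$ within $\ellmax(u) + 1$ rounds; together with $t > \max_w \ellmax(w) \ge \ellmax(u)$, this contradicts $\ell_t(u) \le 0$. For the bound on $t - t'$, maximality of $t'$ ensures that no reset occurs at $u$ in rounds $t' + 1, \ldots, t - 1$, so an induction on $k \ge 0$ using (ii) yields $\ell_{t' + 1 + k}(u) = -\ellmax(u) + k$ for every $k$ with $-\ellmax(u) + k \le 0$; combined with $\ell_t(u) \le 0$, this forces $t - t'$ to be of order $\ellmax(u)$, which is exactly the claimed range. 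The main subtlety is the existence step, which is precisely where the hypothesis $t > \max_w \ellmax(w)$---morally the content of \cref{fact:no_adjacent_negative}---needs to be invoked to rule out initial-configuration artifacts; everything else reduces to routine bookkeeping on the three branches of the update rule.
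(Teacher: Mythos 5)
Your argument is correct and coincides with the paper's (implicit) reasoning: the paper states this lemma without a proof, presenting it as an immediate consequence of the observation that the reset branch is the only mechanism by which a level can drop to a non-positive value, which is precisely what you formalize by tracing back to the most recent reset. One pedantic remark: a careful count in your induction gives $t - t' \le \ellmax(u) + 1$ rather than $\le \ellmax(u)$ (the extremal case being $\ell_t(u)=0$ after $\ellmax(u)$ consecutive increments following the reset), an off-by-one already latent in the lemma's statement and immaterial to how it is used.
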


We define, for any $v \in V$ and $t \geq 1$, the quantities
\[\eta_t(v) = \sum_{u \in N(v) \setminus S_t} 2^{-\ellmax(u)} \quad \text{ and }\quad \eta'_t(v)= \sum_{\substack{u \in N(v)\setminus S_t:\\ \ellmax(u)> \ellmax(v)}}2^{-\ellmax(v)}.\]
For the moment, the definitions of $\eta_t(v)$ and $\eta'_t(v)$ are rather technical, but they will be used  to upper bound the value of $d_{t+1}(v)$.  We notice  that $\eta_t(v)$ and $\eta_t'(v)$ are both decreasing in $t$, since $S_t \subseteq S_{t+1}$.

The following two lemmas are the key to prove  \cref{thm:jeavons,thm:jeavons_degree,thm:two_beeping_channels}, their proofs are deferred to \cref{sec:proof_main_lemmas}.
For a fixed $v \in V$ the next lemma tells us how many rounds we have to wait in order to have a platinum round of $v$.

\begin{lemma}[Lower Bound on Platinum Rounds]
\label{lem:lower_bound_platinum}
Assume that $\ellmax(w) \geq \log \deg(w)+4$ for all $w \in V$.
Consider a vertex $v \in V$ and a round $t > \max_{w\in V} \ellmax(w)$ such that $t$ is not a platinum round of $v$, and $\eta_t(v) \leq 0.0001$. Let 
$\tau^{(v)}(t) = min\{m\geq 0: P_{t,m}(v)\geq 1\}$. 
Then \[\Probc{\tau^{(v)}(t) \geq k\mid \mathcal{F}_t} \leq e^{-\gamma k},\]  for $\gamma=e^{-30}$ and any $k \geq 2 \gamma^{-1} \ellmax(v)$.
\end{lemma}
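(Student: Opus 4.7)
The plan is to bound the probability of the complementary event $\mathcal{B}_k := \{\tau^{(v)}(t) \geq k\}$, that no platinum round for $v$ occurs in $\{t,\ldots,t+k\}$, and show it is at most $e^{-\gamma k}$. The starting observation is that under $\mathcal{B}_k$ strong structural constraints are imposed on the trajectories: for every $t' \in [t, t+k]$ and every $u \in N^+(v)$ we have $\ell_{t'}(u) \geq 1$, and no $u \in N^+(v)$ can beep without a beeping neighbor (otherwise that $u$ would be reset to $-\ellmax(u)$, making a subsequent round platinum for $v$ by \cref{fact:no_beeping_neighbors}). Consequently, $\ell_{t'}(v)$ executes a simple $\pm 1$ walk on $\{1,\ldots,\ellmax(v)\}$ which moves up exactly when some $v$-neighbor beeps, and down (with reflection at~$1$) otherwise.

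Next, I would quantify how often $v$ actually gets a chance to beep alone. Since this walk is confined to an interval of width $\ellmax(v)$, at least $(k-\ellmax(v))/2 \geq k/3$ of the rounds in $[t,t+k-1]$ must be \emph{silent} for $v$ (no $v$-neighbor beeping). A reflection-style argument, enabled by the assumption $k \geq 70\ellmax(v)$, then shows that $\ell_{t'}(v)$ revisits small values (say $O(1)$) on a constant fraction of these silent rounds, and so $p_{t'}(v) = \Omega(1)$ in those rounds. In any such round the conditional probability of a beep-alone event is
\[
q_{t'} \;=\; p_{t'}(v)\prod_{u\in N(v)}(1-p_{t'}(u)) \;\geq\; \Omega(1)\cdot e^{-2\, d_{t'}(v)},
\]
so it remains to show that $d_{t'}(v)$ stays $O(1)$ on a large fraction of these rounds.

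To control $d_{t'}(v)$ I would combine two inputs. First, the hypothesis $\eta_t(v)\le 0.0001$ together with $\ellmax(w)\geq \log\deg(w)+4$ (which gives $\sum_{u\in N(v)} 2^{-\ellmax(u)} \leq 1/16$) bounds the ``baseline'' beeping activity of unstable neighbors, and this bound persists since $\eta_{t'}(v)$ is non-increasing in $t'$. Second, from Step~1, any $v$-neighbor $u$ that beeps sees its level incremented under $\mathcal{B}_k$, so sustained beeping by $u$ drives $p_{t'}(u)$ down toward its $\eta$-controlled floor. Together these yield a uniform constant lower bound $q^* > 0$ on $q_{t'}$ across $\Theta(k)$ rounds. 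The proof then concludes by iterated conditioning, exposing rounds one at a time: in each good round the conditional probability that $v$ does \emph{not} beep alone is at most $1 - q^*$, so $\Pr[\mathcal{B}_k \mid \mathcal{F}_t] \leq (1-q^*)^{\Theta(k)} \leq e^{-\gamma k}$, with the extremely small $\gamma = e^{-30}$ comfortably absorbing the loose constants.

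The main obstacle is the third step: rigorously controlling $d_{t'}(v) = \sum_{u\in N(v)} p_{t'}(u)$ uniformly over the $k$-round window. Neighbours' levels evolve according to \emph{their} neighbourhoods, which may extend well beyond $N^+(v)$, so the dynamics of $p_{t'}(u)$ is not directly governed by the bad event on $v$. I expect the argument uses a pessimistic coupling that replaces $p_{t'}(u)$ by the worst case compatible with $\mathcal{B}_k$—exploiting that each beep of $u$ costs one level unit—together with a case split depending on whether $\ell_t(u)$ starts low (so repeated forced beeps quickly push it up) or already high (in which case $p_{t'}(u)$ is immediately small, bounded via $\eta_t$). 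The generous numerical constants in the statement ($0.0001$, $70\ellmax(v)$, $e^{-30}$) strongly suggest that tight bounds are not attempted, and large slack is absorbed precisely at this step.
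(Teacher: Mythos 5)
There is a genuine gap, and it sits at the very step your outline treats as easy. Your Step~2 claims that because $\ell_{t'}(v)$ performs a $\pm 1$ walk confined to $\{1,\ldots,\ellmax(v)\}$, at least $(k-\ellmax(v))/2$ of the rounds must be silent for $v$. This is false: the update is $\ell_{t+1}(v)=\min\{\ell_t(v)+1,\ellmax(v)\}$, so up-moves are absorbed at the cap. The level can park at $\ellmax(v)$ for the entire window while some neighbor beeps in every single round, giving \emph{zero} silent rounds and no contradiction with confinement. (The down-moves are likewise floored at $1$, so no parity/reflection bookkeeping rescues the count.) Since everything downstream --- ``$p_{t'}(v)=\Omega(1)$ on a constant fraction of silent rounds'' and the final product bound $(1-q^*)^{\Theta(k)}$ --- rests on this count, the argument does not close. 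The regime that kills it is exactly $d_{t'}(v)=\Theta(1)$ sustained by low-degree neighbors at level $1$ (each with $p=1/2$): then $v$ hears a beep essentially every round, never gets to beep alone, and your proposal produces no platinum round at all.

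The paper's proof is built precisely to handle that regime, and the missing idea is that the platinum round need not come from $v$ beeping alone --- it can come from a \emph{neighbor} of $v$ beeping alone. The paper defines \emph{light} vertices ($\mu_t(u)>0$ and $d_t(u)\le 10$ or $\ell_t(u)\le 0$) and two types of \emph{golden} rounds: (a) $\ell_t(v)\le 1$ and $d_t(v)\le 0.02$, where $v$ itself beeps alone with constant probability (this is the only case your sketch covers); and (b) $d^L_t(v)>0.001$, where by a smallest-beeping-index argument some light neighbor beeps alone with probability at least $e^{-27}$. The dichotomy driving the count of golden rounds is a drift argument on $d_t(v)$ (\cref{claim:wrong_moves}): if $d_t(v)$ is large and the light contribution is negligible, the heavy neighbors (those with $d_t(u)\ge 10$) hear beeps with probability $1-e^{-10}$ and halve their probabilities, so $d_{t+1}(v)<0.6\,d_t(v)$ with probability $0.97$; hence either $d_t(v)$ decays into regime (a) or the light mass is non-negligible, which is regime (b). Your closing paragraph correctly identifies controlling $d_{t'}(v)$ as the hard part and gestures at ``each beep costs a level unit,'' but the fix is not a tighter bound on $d_{t'}(v)$ (which genuinely need not decay below a constant); it is the case split into light/heavy neighbors and the type-(b) golden rounds, which your outline does not contain.
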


We notice that, if $\ellmax(w)$ is constant over all vertices $w \in V$, i.e., $\ellmax(w)=\ellmax$ for every vertex $w \in V$, then the existence of a platinum round $t$ of $v$ such that $t > \ellmax$ is by \cref{fact:no_beeping_neighbors} sufficient to guarantee that $v$ will be stable at the latest in round $t+\ellmax$. Indeed, \cref{fact:no_beeping_neighbors} implies the existence of a round $1 \leq t' \leq t$ and a vertex $u \in N^+(v)$ such that $u$ was beeping in round $t'$ without beeping neighbors, and so $\ell_{t'+1}(u)=-\ellmax$ and $\mu_{t'+1}(u)>0$. This implies that $u$ beeps in the following $\ellmax$ rounds, during which all neighbors of $u$ will increase their level until they reach maximum level $\ellmax$. This implies that $u$ is such that $\ell_{t+\ellmax}(u)\leq 0$ and $\mu_{t+\ellmax}(u)=1$, and hence $u,v \in S_{t+\ellmax}$ and $u \in I_{t+\ellmax}$. 

However, when $\ellmax(w)$ is not constant, the analysis becomes considerably more complicated, since the existence of a platinum round of $v$ does not  necessarily imply the subsequent stabilization of $v$. 
Consider now some round $t >\max_{w \in V} \ellmax(w)$ which is platinum for $v$, and let $u \in N^+(v)$ be a prominent vertex. After round $t$, two things may happen:
\begin{enumerate}[(i)]
    \item In some round $t+m$ with $m \geq 1$, $u$ is no longer prominent, and hence $u \not \in I_{t+m}$ and $u,v$ may not be stable in round ${t+m}$;
    \item In some round $t+m$ with $m \geq 0$, vertex $u$ is prominent and all its neighbors have reached the maximum level, i.e., $\mu_{t+m}(u)=1$, and so $u \in I_{t+m}$ and $u,v \in S_{t+m}$.
\end{enumerate}
In the next lemma, we characterize the distribution of rounds $t$ for the above two cases. Let
\begin{gather*}
    \sigmafree^{(u)}(t) = \min\{m \geq 0:u \not \in \Promi_{t+m}\}
    \qquad
    \\
    \sigmastable^{(u)}(t) = \min\{m\geq 0: u \in I_{t+m}\}
    \\
    \sigma^{(u)}(t) = \min\{\sigmafree^{(u)}(t),\ \sigmastable^{(u)}(t)\}.
\end{gather*}

\begin{lemma}[Stopping Times for Platinum Rounds]
Assume that $\ellmax(w) \geq \log \deg(w) + 4$ for all $w \in  V$. 
Consider a round $t > \max_{w \in V}\ellmax(w)$ and a vertex $u \in \Promi_t\setminus S_t$. Then
\begin{enumerate}[(a)]
    \item $\Probc{\sigma^{(u)}(t) = \sigmastable^{(u)}(t) \wedge \sigma^{(u)}(t) < \max_{w \in N(u)}\ellmax(w) \mid \mathcal{F}_t}\geq 3^{-\eta'_t(u)}$;
    \item $\Probc{\sigma^{(u)}(t)=\sigmafree(t) \wedge \sigma^{(u)}(t)>\ellmax(u)+x \mid \mathcal{F}_t} \leq \eta'_t(u) 2^{-x}$ for any $x \geq 0$.
\end{enumerate}
\label{lem:char_platinum}
\end{lemma}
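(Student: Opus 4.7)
The plan centers on a coupling observation: while $u$ remains prominent (level $\le 0$), $u$ beeps every round with probability $1$, so each neighbor $w\in N(u)$ either has its level incremented by one that round or is already saturated at $\ellmax(w)$; symmetrically, every round with at least one neighbor beep increments $u$'s level by one, while a round in which no neighbor of $u$ beeps resets $u$ to $-\ellmax(u)$. By \cref{fact:no_adjacent_negative}, each $w\in N(u)$ satisfies $\ell_t(w)\ge 1$, so its beep probability is at most $1/2$ at round $t$ and, conditional on no intervening beep, decays geometrically in subsequent rounds.

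For part (a), I would isolate the sufficient event $E_a$ that no neighbor of $u$ beeps in rounds $t,\dots,t+L-1$, where $L=\max_{w\in N(u)}\ellmax(w)$. Under $E_a$ each neighbor's level climbs deterministically to $\ellmax(w)$ within $L-1$ rounds and $u$ resets each round, so $u\in I_{t+L-1}$ and hence $\sigma^{(u)}(t)\le L-1<L$ with $\sigma^{(u)}(t)=\sigmastable^{(u)}(t)$. I would then lower bound $\Pr[E_a\mid \mathcal{F}_t]$ via the inequality $1-x\ge 4^{-x}$ valid on $[0,1/2]$, getting $\Pr[E_a\mid \mathcal{F}_t]\ge 4^{-\Sigma}$ with $\Sigma=\sum_{s,w}p_s(w)$, and telescope the inner sum to at most $2\cdot 2^{-\ell_t(w)}$. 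Splitting $N(u)\setminus S_t$ by whether $\ellmax(w)\le\ellmax(u)$ or $\ellmax(w)>\ellmax(u)$, the ``short'' neighbors contribute only an absolute constant to $\Sigma$ (absorbed into the base of the exponential) thanks to the hypothesis $\ellmax(w)\ge\log\deg(w)+4$, while the ``tall'' neighbors' contribution equals $\eta'_t(u)$ up to a constant factor; choosing constants carefully then yields $\Pr[E_a\mid\mathcal{F}_t]\ge 3^{-\eta'_t(u)}$.

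For part (b), I would condition on the round $t'$ of $u$'s last reset before losing prominence. For $\sigma^{(u)}(t)=\sigmafree(t)=m$ with $m>\ellmax(u)+x$, $u$'s level must climb from $-\ellmax(u)$ at round $t'+1$ to $1$ at round $t+m$, so every round in $[t'+1,t+m-1]$ features at least one neighbor beep and the window has length exactly $\ellmax(u)+1$. In the critical round $t+m-1$, each neighbor has already received $\ellmax(u)$ of $u$'s beeps since $t'+1$, so its level is at least $\ell_{t'+1}(w)+\ellmax(u)\ge 1+\ellmax(u)$: neighbors with $\ellmax(w)\le\ellmax(u)$ are saturated and cannot beep, while tall neighbors have beep probability at most $2^{-(\ellmax(u)+1)}$ at that round. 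Union bounding the critical beep over the tall non-stable candidates produces the $\eta'_t(u)$ factor, while summing the geometric tail of the critical beep's position over delays $y\ge x$ yields the $2^{-x}$ factor, giving the bound $\eta'_t(u)\cdot 2^{-x}$. The main obstacle is the nested conditioning: one must argue the joint probability ``every intermediate round has a beep \emph{and} the critical late beep occurs at time $\ge \ellmax(u)+x$'' factors well enough to give a clean product bound, and crucially that short neighbors cannot supply the late beep---this is precisely what makes $\eta'_t(u)$, rather than the larger $\eta_t(u)$, appear.
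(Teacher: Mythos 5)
Your part (b) is essentially the paper's argument and is sound: you locate the last reset $t'$, observe that exiting prominence at round $t+m$ forces $\ellmax(u)+1$ consecutive beep-rounds ending at $t+m-1\ge t+\ellmax(u)+x$, note that by then every neighbor with $\ellmax(w)\le\ellmax(u)$ is saturated while every remaining non-stable neighbor beeps with probability at most $2^{-(1+s)}$ in round $t+s$, and sum the geometric tail over the tall candidates to get $\eta'_t(u)2^{-x}$. This is exactly why $\eta'_t(u)$ rather than $\eta_t(u)$ appears, as you say.

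Part (a), however, has a genuine gap. Your event $E_a$ demands silence from \emph{all} neighbors starting already at round $t$, and its probability cannot be bounded below by $3^{-\eta'_t(u)}$. The telescoping yields only $\Probc{E_a}\ge 4^{-2d_t(u)}$ with $d_t(u)=\sum_{w\in N(u)}2^{-\ell_t(w)}$, and at a generic prominent round the only a priori information is $\ell_t(w)\ge 1$ (\cref{fact:no_adjacent_negative}); the hypothesis $\ellmax(w)\ge\log\deg(w)+4$ controls the \emph{minimum} beeping probability $2^{-\ellmax(w)}$, not $p_t(w)=2^{-\ell_t(w)}$. Concretely, if $u$ has just reset and all $\deg(u)$ neighbors sit at level $2$ with $\ellmax(w)=\ellmax(u)$ (a reachable configuration), then $\eta'_t(u)=0$, so (a) asserts probability $1$, while $\Probc{E_a}\le(3/4)^{\deg(u)}$. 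Thus neither the ``short neighbors contribute a constant'' step nor the ``tall neighbors contribute $\eta'_t(u)$'' step is valid at round $t$. The missing idea, which is the crux of the paper's proof, is that \emph{no silence is needed during the first $\ellmax(u)$ rounds after the reset}: while prominent, $u$ beeps unconditionally and cannot leave prominence before its level climbs back from $-\ellmax(u)$ to $1$, so for $\ellmax(u)$ rounds every neighbor's level rises by one per round regardless of who beeps. After this free deterministic window every short neighbor is saturated and every tall one beeps with probability at most $2^{-\ellmax(u)}$, summing to $\eta'_t(u)$; only from that point on does one pay for silence, over at most $\max_{w\in N(u)}\ellmax(w)-\ellmax(u)$ further rounds with per-neighbor probabilities halving each round, which gives the product lower bound $3^{-\eta'_t(u)}$. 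Your own opening observation about the deterministic climb contains this fact; you just did not use it to delay the start of the silence requirement.
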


\subsection{Analysis Overview}

We first give an overview of the proofs of \cref{lem:lower_bound_platinum,lem:char_platinum}, and then we will see how to use these results to prove \cref{thm:jeavons,thm:jeavons_degree,thm:two_beeping_channels}. 
The proof of \cref{lem:lower_bound_platinum} has as a starting point the proof in \cite{Ghaffari17}, but then it develops differently.
First, as in \cite{Ghaffari17}, we define a further type of round called \emph{golden round}, which are rounds having constant probability of becoming platinum in the subsequent round. We prove that, for any vertex $v$ in any fixed interval of rounds  of length $k=\Omega(\ellmax(v))$, we have a constant fraction of golden rounds with probability at least $1-e^{-\Omega(k)}$, conditioned on the absence of platinum rounds during that time interval. To prove the latter, as in \cite{Ghaffari17}, we analyze the development of the function $d_t(v)$ -- the expected number of beeping neighbors of
$v$ in round $t$ -- during this time span. Note that platinum rounds and the conditioning were not considered in \cite{Ghaffari17} and are essential in our proof and setting.

The proof of \cref{lem:char_platinum} relies on \cref{fact:no_beeping_neighbors}. Assuming that $u$ is prominent at time $t$, we characterize the probabilities with which, after round $t$, $u$ reaches again a positive level or stabilizes. From \cref{fact:no_beeping_neighbors} there exists a round $t-\ellmax(u) \leq t' \leq t$ where we have that $\ell_{t'+1}(u)=-\ellmax(u)$ and then trivially $d_{t'+1}(u)=\sum_{w \in N(u)}p_{t'+1}(u) \leq \deg(u)$.
Then, in the subsequent $\ellmax(u)$ rounds, vertex $u$ keeps beeping regardless the behavior of the vertices in $N(u)$. Hence, $\ell_{t'+1+\ellmax(u)}(w)=\min \{\ellmax(w), \ell_{t'+1}(w) +\ellmax(u)\}$ for each $w\in N(u)$ and thus, $p_{t'+1+\ellmax(u)}(w) \leq 2^{-\ellmax(u)}$ if $\ell_{t'+1+\ellmax(u)}(w) \not=\ellmax(w)$. This implies that \[d_{t'+1+\ellmax(u)}(u) =\sum_{w\in N(u)} p_{t'+1+\ellmax(u)}(w)\leq\sum_{\substack{w \in N(u)\setminus S_{t'+1}:\\ \ellmax(w)> \ellmax(u)}}2^{-\ellmax(u)}\leq \eta_{t'+1}'(u).\] We will see that this implies that the vertices in $N(u)$ will reach their maximal level with probability at least $3^{-\eta'_{t+1}(u)}$, and so in this case the platinum round leads to the stabilization of $u$. 
On the other hand, part (b) of the lemma follows from the observation that, after the first $\ellmax(u)$ rounds after $t$, the probability that some vertex in $N(u)$ beeps decreases in each round by a constant factor.

\cref{thm:jeavons,thm:two_beeping_channels} follow from the observation, already stated above, that if $\ellmax(w)$ is constant over $w$ then, for each vertex $v$, one platinum round is sufficient to guarantee the stabilization of $v$. Moreover,  the choices of $\ellmax(w)$ specified in the theorems guarantee that $\eta_t(v) \leq 0.0001$ for every $v$ and $t\geq 1$, and so \cref{lem:lower_bound_platinum} can always be used for each non-platinum round $t$, and implies that we have to wait at most $O(\log n)$ rounds to have a platinum round for each vertex $v$ w.h.p., that in turns imply stabilization.

The proof of \cref{thm:jeavons_degree} is considerably harder. In this case, we can have several sequences of consecutive platinum rounds, intermittent by sequences of consecutive non-platinum rounds, until we reach a platinum round leading to the stabilization of the vertex. The analysis relies on two main parts:
\begin{enumerate}[(1)]
    \item We split the vertices in $O(\log \log n)$ sets $V_i$. Before analyzing the stabilization of a vertex $v \in V_i$, we wait for round  $T_i$ in which all vertices in $\cup_{j<i}V_j$ have stabilized. The sets $V_i$ are defined according to the values $\ellmax(v)$ of the vertices. According to the definition of $T_i$, we can apply, for each round  $t \geq T_i$,  \cref{lem:lower_bound_platinum,lem:char_platinum} to vertices in $V_i$.
    \item We then prove that, after round $T_i$, each vertex $v \in V_i$ stabilize in $O(\log n)$ additional rounds w.h.p. The analysis of the latter statement relies on  \cref{lem:lower_bound_platinum,lem:char_platinum}, which characterize the lengths of three times intervals: that of the non-platinum rounds, of the platinum rounds, and that of the number of platinum rounds not leading to the stabilization of vertex $v$.
\end{enumerate}

\section{Knowledge of  Maximum Degree \texorpdfstring{$\Delta$}{Delta} (Proof of
\texorpdfstring{\cref{thm:jeavons}}{Theorem 2})}
\label{sec:warm_up}

The following proof is a warm-up for the general case. It is directly implied by \cref{lem:lower_bound_platinum} and the choice of $\ellmax(v)$.

\begin{proof}[Proof of \cref{thm:jeavons}]
    As already mentioned, since $\ellmax(v)$ is defined independently of $v$, each vertex $v$ just requires a single platinum round to become stable in at most $\ellmax$ rounds. Indeed, for each $v \in V$ and each $t \geq 1$, 
    \[\eta_t(v) \le \sum_{u \in N(v)} 2^{-\log\Delta - 15} \le 2^{-15} \leq 0.0001 \quad \text{and} \quad \eta_t'(v)=0.\]
    This implies that, if $t = 2\ellmax$ and $v \in V$, we have  $\eta_t(v) \leq 0.0001$.
    Hence,  by \cref{lem:lower_bound_platinum}, if we take $m = 2\gamma^{-1}\log n$ (where $\gamma$ is defined in \cref{lem:lower_bound_platinum}), we have that $\Probc{\tau^{(v)}(t) \leq m \mid \mathcal{F}_t} \geq 1- 1/n^2$, and so $P_{t,m}(v) \geq 1$ with probability at least $1-1/n^2$.  Then, from \cref{lem:char_platinum}(a), and since $\eta_{t+m}'(v)=0$, we have that, given $\mathcal{F}_{m+t}$, the vertex $v$ is stable after at most $\ellmax$ rounds with probability $1$. Hence, vertex $v$ is stable with probability $1-1/n^2$ after $t+m+\ellmax$ rounds, and since $\ellmax=O(\log n)$ we have that $t+m+\ellmax = O(\log n)$. The theorem follows from the union bound over all the vertices.
\end{proof}

\section{Knowledge of Own Degree (Proof of \texorpdfstring{\cref{thm:jeavons_degree}}{Theorem 3})}
\label{sec:thm:jeavons_degree}

In this section we prove \cref{thm:jeavons_degree}. First, we prove the following lemma. 

\begin{lemma}
Assume that $\ellmax(w) \geq 2\log \deg(w)$ for every $w \in V$ and that, for some $c=O(1)$, $\max_{w \in V}\ellmax(w) \leq c \log n$.
Consider a vertex $v \in V$ and a round $t > \max_{w \in V} \ellmax(w)$ such that $\eta_t(v) \leq 0.0001$ and $\ellmax(v) \leq 2 \ellmax(u)$ for each $u \in N(v) \setminus S_t$. 
Then, there exists a constant $M=O(1)$ such that $\Probc{v \in S_{t+m}\mid \mathcal{F}_t}\geq 1-1/n^2$, provided $m=M \log n$.
\label{lem:stabilization}
\end{lemma}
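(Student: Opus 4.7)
The plan is to partition the $m = M \log n$ rounds after $t$ into alternating \emph{non-platinum gaps} and \emph{platinum attempts}: an attempt begins at the first platinum round following a gap and ends when the prominent vertex $u \in N^+(v)$ involved either stabilizes (a \emph{success}, which forces $v \in S$) or loses prominence (a \emph{failure}). Since $S_{t'}$ is monotone in $t'$, both $\eta_{t'}(v) \le 0.0001$ and $\ellmax(v) \le 2\ellmax(u)$ for $u \in N(v)\setminus S_{t'}$ are preserved throughout the window, so \cref{lem:lower_bound_platinum} applies at each non-platinum round and \cref{lem:char_platinum} applies to the prominent vertex at the start of each attempt.

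The key quantitative step is a constant lower bound on the per-attempt success probability $3^{-\eta'_{t'}(u)}$. Using $\ellmax(u) \ge \log \deg(u)$, we obtain $\eta'_{t'}(u) \le \deg(u)\cdot 2^{-\ellmax(u)} \le 1$, so \cref{lem:char_platinum}(a) ensures that each attempt stabilizes $u$ within $\max_{w\in N(u)} \ellmax(w) = O(\log n)$ rounds with probability at least $3^{-1} = 1/3$, thereby putting $v$ into $S$. For durations, \cref{lem:lower_bound_platinum} with $k = \Theta(\log n)$ bounds each non-platinum gap by $O(\log n)$, and \cref{lem:char_platinum}(b) with $x = \Theta(\log n)$ bounds each failed attempt by $\ellmax(u) + O(\log n) = O(\log n)$, each with failure probability at most $n^{-C}$.

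A Chernoff/geometric argument over $\Theta(\log n)$ constant-success Bernoulli trials then drives the cumulative failure probability below $1/n^2$, while a union bound over the $O(\log n)$ phases preserves the duration estimates. The main obstacle, and the step I expect to require the most care, is reconciling this with the stated $m = M \log n$ for a constant $M$: multiplying $\Theta(\log n)$ attempts by $\Theta(\log n)$ rounds per attempt naively yields $\Theta(\log^2 n)$. To close this gap, I would exploit sharper concentration of the cumulative duration by chaining \cref{lem:char_platinum}(b): each failed attempt has expected length only $\ellmax(u) + O(1)$ (since the tail beyond $\ellmax(u)+x$ is $\eta'_{t'}(u)\cdot 2^{-x}$ with $\eta'_{t'}(u)\le 1$), so the excesses form a sub-exponential sequence and a Bernstein-type bound on $\sum_{i \le K} D_i$, combined with the analogous tail from \cref{lem:lower_bound_platinum} for the gaps, yields $\sum_i D_i \le M \log n$ with probability at least $1 - 1/n^2$ for a sufficiently large constant $M$. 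Conjoined with the Chernoff bound on the number of successes within this many attempts, this gives $v \in S_{t+m}$ with probability at least $1 - 1/n^2$, completing the proof.
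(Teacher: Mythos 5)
Your decomposition into alternating non-platinum gaps and platinum attempts is exactly the one the paper uses, and you correctly identify the central obstacle: with a uniform constant per-attempt success probability you need $\Theta(\log n)$ attempts, and each attempt and each gap carries a \emph{deterministic} overhead of up to $\ellmax(u_i)$ (resp.\ $70\ellmax(v)$) rounds, which can be $\Theta(\log n)$, giving $\Theta(\log^2 n)$ in total. However, your proposed fix does not close this gap. A Bernstein-type bound on the sub-exponential \emph{excesses} beyond $\ellmax(u_i)+x$ from \cref{lem:char_platinum}(b) only controls the random part of each failed attempt's duration; it does nothing about the deterministic terms $\sum_i \ellmax(u_i)$ and $\sum_i 70\ellmax(v)$, which remain $\Theta(\log^2 n)$ whenever $\Theta(\log n)$ attempts occur at vertices with $\ellmax(u_i)=\Theta(\log n)$. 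Concentration cannot rescue a sum whose deterministic part is already too large; the bound you would actually obtain this way is $O(\log^2 n)$, not $O(\log n)$.

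The missing idea is to replace the uniform success bound $3^{-\eta'} \geq 1/3$ by one that improves with $\ellmax(u_i)$: since $\eta'(u_i) \leq \deg(u_i)\,2^{-\ellmax(u_i)} \leq 2^{-\ellmax(u_i)/2}$ (using $\ellmax(u_i) \geq 2\log\deg(u_i)$, as in the setting of \cref{thm:jeavons_degree}), \cref{lem:char_platinum}(a) gives that attempt $i$ fails with probability at most $1-3^{-\eta'(u_i)} \leq 2\eta'(u_i) \leq 2^{1-\ellmax(u_i)/2}$. Multiplying over attempts, the probability that the first $h$ attempts all fail is at most $2^{1-\sum_{i\leq h}\ellmax(u_i)/2}$, which drops below $n^{-3}$ as soon as the \emph{weighted} count $\sum_i \ellmax(u_i)$ reaches $7\log n$. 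Hence with probability $1-n^{-3}$ one has $\sum_{i\leq J}\ellmax(u_i) = O(\log n)$, and this is exactly what tames the deterministic overheads: $\sum_i \ellmax(u_i) \leq 7\log n$ directly bounds the attempt durations, and $\sum_i 70\ellmax(v) \leq 140\sum_i \ellmax(u_i)$ bounds the gap overheads via $\ellmax(v) \leq 2\ellmax(u_i)$. The geometric excesses are then summed essentially as you propose. Without this weighted failure bound, your argument as written establishes only an $O(\log^2 n)$ stabilization time.
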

\begin{proof}
We fix the execution up to the end of round $t$, so
we do not have to condition probabilities on $\mathcal{F}_t$.
We consider the sequence of rounds (which may also be infinite, with $J=+ \infty$) \[t \leq t + \tau_1 \leq t + \tau_1 + \sigma_1 = m_1 + t \leq \cdots \leq t+m_{J-1}+ \tau_J \leq t+m_{J-1} + \tau_J + \sigma_J =t+m_J,\]
and the corresponding sequence of vertices $v_1,v_2,\dots,v_J \in N^+(v)\setminus S_t$ such that
\begin{enumerate}
    \item $t + m_{i-1} + \tau_i$ is platinum for $v$ and $v_i \in \Promi_{t + m_{i-1} + \tau_i}\cap N^+(v)$ for each $i=1,\dots, J$;
    \item $m_i = m_{i-1} + \tau_i + \sigma_i$ is such that $v_i \not \in \Promi_{t+m_i}$ for each $i=1,\dots J-1$;
    \item $J= \min\{h \geq 1: v_{h} \in I_{m_h + t}\}$, hence $v_J \in I_{t+m_J}$ and $v \in S_{t+m_J}$. If $v$ never stabilizes, we define $J=+\infty$ and the sequence $v_1,v_2, \dots$ has infinite length.
\end{enumerate}
We observe that $\sigma_i$ and $\tau_i$ are defined such that
\begin{equation}
    \tau_i = \tau^{(v_i)}(t+m_{i-1}), \quad \sigma_i = \sigma^{(v_i)}(t+m_{i-1}+\tau_i) \quad \text{ and } \quad \sigma_J = \sigmastable^{(v_J)}(t+m_{J-1}+\tau_J).
    \label{eq:char_stopping_times}
\end{equation}

Consider the following two facts:
\begin{enumerate}[(i)]
    \item  $\sum_{i=1}^J (\sigma_i+\ellmax(v_i))\leq M_1 \log n$  for some $M_1=\Theta(1)$ with probability at least $1-1/n^3$;
    \item Provided that $\sum_{i=1}^J (\sigma_i + \ellmax(v_i)) \leq M_1 \log n$, it holds $\sum_{i=1}^J \tau_i \leq M _2\log n$ for some $M_2 = \Theta(1)$  with probability at least $1-1/n^3$.
\end{enumerate}

The above facts (i) and (ii) prove the lemma. Indeed, if $m = M_1 \log n + M_2 \log n$, we have that
\begin{align*}
    &\Probc{v \not \in S_{t+m}} \leq \Probc{\sum_{i=1}^J(\sigma_i + \tau_i) \geq m} \leq \Probc{\sum_{i=1}^J \sigma_i \geq M_1 \log n \vee \sum_{i=1}^J \tau_i \geq M_2 \log n}
    \\ &\leq \Probc{\sum_{i=1}^J \tau_i \geq M_2 \log n \;\middle|\; \sum_{i=1}^J (\sigma_i + \ellmax(v_i)) \leq M_1\log n} +
    \\ & \quad+ \Probc{\sum_{i=1}^J (\sigma_i + \ellmax(v_i)) \geq M_1 \log n} \leq \frac{2}{n^3}.
\end{align*}
Now we prove (i) and (ii) separately.

\subparagraph{Proof of (i).}
We remark that, in this first step, we are just looking at the randomness of the execution during the time intervals $[t+\tau_i+1, t+\tau_i+\sigma_i]$ for $i=1,\dots, J$.
We notice that
\begin{align}
\notag
&\Probc{\sum_{i=1}^J \sigma_i+ \ellmax(v_i) \geq M_1 \log n} 
\leq \Probc{\sum_{i=1}^J \ellmax(v_i) \geq 7 \log n \vee \sigma_J \geq \max_{w \in V}\ellmax(w)} 
\\ &+ \Probc{\sum_{i=1}^J \ellmax(v_i) \leq 7 \log n \wedge \sigma_J \leq \max_{w \in V}\ellmax(w) \wedge\sum_{i=1}^J (\sigma_i+\ellmax(v_i)) \geq M_1 \log n}
\label{eq:bound_J_first}
\end{align}

We start by showing that the first term in the inequality above is at most $1/(2n^3)$.
Let $h=\sup \{j\geq 1: \sum_{i=1}^j \ellmax(v_i) \geq 7 \log n\}$ and notice that, since $\min_{v \in V}\ellmax(v) \geq 1$, from the minimality of $h$ we have that $h \leq 7 \log n$. Assume that $h \leq J$, otherwise the inequality follows trivially.
\cref{lem:char_platinum}(a) together with \eqref{eq:char_stopping_times} yields 
\begin{align}
& \Probc{\sum_{i=1}^J \ellmax(v_i) \geq 7 \log n \vee \sigma_J \geq \max_{w \in V}\ellmax(w)} \notag \\ &
\leq \Probc{\bigcap_{i=1}^h \{v_i \not \in \Promi_{t_{i+1}} \vee \sigma_i \geq \max_{w \in V} \ellmax(w)\}} \leq \prod_{i=1}^h \left(1-3^{-\eta'_t(v_i)}\right) \leq 2\prod_{i=1}^h \eta'_t(v_i).
\label{eq:upper_bound_J}
\end{align}
Moreover, we have that
\begin{align*}
    \eta'_t(v_i) &\le \sum_{\substack{w \in N(v_i): \\ \ellmax(w)> \ellmax(v_i)}}2^{-\ellmax(v_i)} \leq \deg(v_i) \cdot 2^{-\ellmax(v_i)}
    \\ &
    \leq \frac{\deg(v_i)}{2^{\ellmax(v_i)/2}} 2^{-\ellmax(v_i)/2} \leq 2^{-\ellmax(v_i)/2},
\end{align*}
where the last inequality follows from the fact that $\ellmax(v_i) \geq 2 \log \deg(v_i)$. Hence, from the latter inequality and \eqref{eq:bound_J_first}  we have that
\begin{equation*}
\Probc{\sum_{i=1}^J \ellmax(v_i) \geq 7 \log n \vee \sigma_J \geq \max_{w \in V} \ellmax(w)} \leq 2\!\prod_{i=1}^h \eta_t'(v_i) \leq 2^{-\sum_{i=1}^h \ellmax(v_i)/2+1} \leq \frac{1}{2n^3},
  \label{prob_upperbound_J}
\end{equation*}
where the last inequality follows from the fact that $\sum_{i=1}^h \ellmax(v_i) \geq 7 \log n$. 

We proceed by showing that the term in \eqref{eq:bound_J_first} is bounded by $1/(2n^3)$.
From \eqref{eq:char_stopping_times} and \cref{lem:char_platinum}(b) we have that, for each $i=1,\dots, J-1$, the random variables $\sigma_i$ are stochastically dominated by $\ellmax(v_i)+Y_i$, where $Y_i$ are independent geometric random variables with parameter $1/2$. We have that, fixing $M_1 = 36 +c$ and since $\max_{w \in V} \ellmax(w) \leq c \log n$,
\begin{align*}
&\Probc{\sum_{i=1}^J \ellmax(v_i) \leq 7 \log n \wedge \sigma_J \leq \max_{w \in V}\ellmax(w) \wedge \sum_{i=1}^J \sigma_i +\ellmax(v_i)\geq \!M_1 \log n}
\\ & \leq \Probc{\sum_{i=1}^{J-1}(Y_i + 2\ellmax(v_i)) +\sigma_J \geq \!M_1 \log n \wedge \sum_{i=1}^J \ellmax(v_i) \leq \!7 \log n \wedge \sigma_J\! \leq \max_{w \in V}\ellmax(w)}
\\ &
\leq     \Probc{\sum_{i=1}^{J-1}Y_i \geq 2J + 8 \log n \wedge J\leq \!7 \log n} 
\\ &= \Probc{\mathrm{Bin}(2J+8 \log n,\tfrac{1}{2}) \leq J-1 \wedge J \leq 7 \log n}
\leq \frac{1}{2n^3},
\end{align*}
where the last inequality follows from \cref{lem:geometric_distribution}, in \cref{sec:app-tools}, and since $\sum_{i=1}^J\ellmax(v_i) \leq 7 \log n$ implies that $J \leq 7 \log n$. 

\subparagraph{Proof of (ii).}

This time we are looking at the randomness of the rounds $[t+m_i+1, t+m_i+\tau_{i+1}]$ for $i=1,\dots, J$.
From \eqref{eq:char_stopping_times} and \cref{lem:lower_bound_platinum}, we have that the random variables $\tau_i$ are stochastically dominated by
$2\gamma^{-1} \ellmax(v_i) + X_i$,
where $X_i$ are i.i.d.~geometric random variable with parameter $p = 1-e^{-\gamma}$, where $\gamma = e^{-30}$. Then, we have that, assuming that $\sum_{i=1}^J (\sigma_i + \ellmax(v_i)) \leq M_1 \log n$ and in particular that $\sum_{i=1}^J \ellmax(v_i) \leq M_1 \log n$, if $M_2 = 2 \gamma^{-1}M_1 + M_1/p + 4/p^2$, then
\begin{align}
&\Probc{\sum_{i=1}^J \tau_i \geq M_2 \log n \mid \sum_{i=1}^J \ellmax(v_i) \leq M_1 \log n}  \notag
\\ &\leq \Probc{\sum_{i=1}^J X_i + 2 \gamma^{-1}\ellmax(v_i) \geq M_2 \log n \mid \sum_{i=1}^J \ellmax(v_i) \leq M_1 \log n}
\notag
\\
    & \leq \Probc{\sum_{i=1}^J X_i \geq \tfrac{J}{p}+\tfrac{4\log n}{p^2} \mid \sum_{i=1}^J \ellmax(v_i)\leq M_1 \log n} \notag
    \\ &= \Probc{\text{Bin}(\tfrac{J}{p}+\tfrac{4\log n}{p^2},p) \leq J \mid \sum_{i=1}^J \ellmax(v_i) \leq M_1 \log n}  \leq \frac{1}{n^3}\label{eq:geometric_concentration}
\end{align}
where \eqref{eq:geometric_concentration} follows from \cref{lem:geometric_distribution}, and the last inequality follows from the fact that $p = 1-e^{-\gamma}$ and that $\sum_{i=1}^J \ellmax(v_i) \leq M_1 \log n$ implies that  $J \leq M_1 \log n$.
\end{proof}

We now can proceed with the proof of \cref{thm:jeavons_degree}.

\begin{proof}[Proof of \cref{thm:jeavons_degree}]
We have $2\log \deg(w) + 30 \leq \ellmax(w) \leq c_2 \log n$ for every $w \in V$ and some $c_2=O(1)$. For each $i =1,\dots, c_2\log \log n$, we consider the following subsets of vertices
\[V_i = \{v \in V: \ellmax(v) \in [2^i, 2^{i+1}]\}.\]
Let $T_i$ be the round until all the vertices in $\cup_{j \leq i}V_j$ are stabilized, i.e.,
\[T_i = \min\{t' \geq 1: \cup_{j \leq i}V_j \subseteq S_{t'}\}.\]
We have that, for each vertex $v \in V_{i+1}$ and each $t \geq T_i$,
\[ 2\ellmax(u) \geq \ellmax(v) \quad \forall u \in N^+(v) \setminus S_{t}.\]
Indeed, we have $u \not \in \cup_{j \leq i}V_j$ since $u \not \in S_t$ and $t \geq T_i$. Hence, $\ellmax(u) \geq 2^{i+1}$. Since $v \in V_{i+1}$, $\ellmax(v) \leq 2^{i+2}$ and so $2\ellmax(u) \geq \ellmax(v)$. We also have, for each $t \geq T_i$ and each $v \in V_{i+1}$
\[\eta_t(v) \leq \sum_{u \in N(v)\setminus S_t} 2^{-\ellmax(u)} \leq \sum_{u \in N(v) \setminus S_t} 2^{-\ellmax(v)/2} \leq \sum_{u \in N(v)\setminus S_t} \frac{1}{\deg(v)} 2^{-15} \leq 0.0001,\]
where the second inequality follows from the fact that $2\ellmax(u) \geq \ellmax(v) $, and the third inequality since $\ellmax(v) \geq 2 \log \deg(v) + 30$. 

We can now apply \cref{lem:stabilization}, if $t \geq \max\{ T_i, \max_{w \in V} \ellmax(w)+1\}$, to all the vertices $v \in V_{i+1}$, obtaining (with an union bound over all the vertices in $V_i$) the existence of a round $m_i = O(\log n)$ such that $\Probc{V_{i+1}\subseteq S_{t+m_i}} \geq 1-1/n$. Applying this argument iteratively for each $i=1,\dots, c_2\log \log n$, we obtain the existence of a round 
$$m= \sum_{i=1}^{\log \log n}m_i = O(\log n \cdot \log \log n)$$ 
such that all  vertices are stable w.h.p.\ at round $m$. 
\end{proof}

\section{Proof of Key Lemmas}
\label{sec:proof_main_lemmas}
\subsection{Lower Bound on Platinum Rounds (Proof of \texorpdfstring{\cref{lem:lower_bound_platinum})}{Lemma Lower Bound Platinum}}
\label{sec:lem:lower_bound_platinum}

Before proving \cref{lem:lower_bound_platinum}, we introduce some  definitions and preliminary lemmas.

\begin{definition}[Light Vertices] A vertex $v \in V$ is called \emph{light in round $t$} if $\mu_t(v) >0 \wedge (d_t(v) \leq 10 \vee \ell_t(v) \leq 0).$
We denote with $L_t$ the set of light vertices at round $t$ and with $H_t = V \setminus L_t$ the set of \emph{heavy (non-light) vertices at round $t$}.
\label{def:light_vertices}
\end{definition}

Intuitively, a light vertex $v$ is prominent or has a positive, constant probability of not receiving a beep signal during round $t$ and, in the latter case, if $p_t(v)$ is large enough, $v$ has a constant probability of beeping without beeping neighbors during round $t$.
We remark that the condition $\mu_t(v)>0$ is necessary since, if $\mu_t(v)=0$, the vertex $v$ hears a beep during round $t$ with probability $1$.

We denote with $d_t^L(v)=\sum_{u \in N(v) \cap L_t}p_t(u)$ the expected number of beeping light neighbors of $v$ in round $t$, and with $d_t^H(v)=\sum_{ u \in N(v) \cap H_t}p_t(u)$ the expected number of beeping heavy neighbors of $v$ in round $t$. We notice that $d_t(v)=d_t^L(v)+d_t^H(v)$.

\begin{definition}[Golden Rounds]
\label{def:golden}
    Round $t$ is a \emph{golden round of vertex $v$} if one of the following two conditions are satisfied:
    \begin{enumerate}[(a)]
        \item $\ell_t(v) \leq 1$  and $d_t(v) \leq 0.02$;
        \item $d_t^L(v) > 0.001$.
    \end{enumerate}
    We denote with $G_{t,k}(v)$ the number of golden rounds of vertex $v$ during rounds $\{t,\dots,t+k\}$.

\end{definition}

In the next section, we will give a lower bound on the number of golden rounds.

\subsubsection{Lower Bound on Golden Rounds}

\begin{lemma}
\label{lem:golden_rounds}
Assume that $\ellmax(w) \geq \log \deg(w) + 4$ for all $w \in V$. Consider a vertex $v \in V$ and a round $t >\max_{w \in V}\ellmax(w)$ such that $t$ is not a platinum round of $v$, and $\eta_t(v) \leq 0.0001$. Let $\tau^{(v)}(t)$ be defined as in \cref{lem:lower_bound_platinum}. Then, we have that, for any $k \geq 70 \cdot \ellmax(v)$,
\[\Probc{G_{t,k}(v) \leq 0.05k \wedge \tau^{(v)}(t) > k \mid \mathcal{F}_t} \leq e^{-k/100}.\]
\end{lemma}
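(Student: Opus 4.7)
The plan is to show that, on the event $\{\tau^{(v)}(t)>k\}$ (no platinum round in the window), the state of $v$ and its neighborhood drifts rapidly toward the golden regime, so that with overwhelming probability at least $0.05k$ rounds are golden. First I would make use of the conditioning: on $\{\tau^{(v)}(t)>k\}$, every $u\in N^+(v)$ satisfies $\ell_s(u)>0$ (hence $p_s(u)\le 1/2$) for all $s\in[t,t+k]$; moreover, by \cref{fact:no_adjacent_negative}, whenever a neighbor $u\in N(v)$ beeps in round $s$ it must also hear a beep (otherwise $u$ would be prominent in $s+1$, violating the conditioning). This is the structural fact that will drive the drift.

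The first quantitative piece is a contraction for $d_s(v)$ in non-golden rounds. In a non-golden round we have $d_s^L(v)\le 0.001$, so almost all weight of $d_s(v)$ sits on heavy neighbors; and each heavy $u\in N(v)$ has $d_s(u)>10$, giving $\Pr[u\text{ hears a beep}\mid \mathcal{F}_s]\ge 1-e^{-10}$. Since hearing a beep halves $p_s(u)$ while not hearing at most doubles it (capped at $1/2$), this yields $\Expc{p_{s+1}(u)\mid\mathcal{F}_s}\le \alpha\, p_s(u)$ for an absolute constant $\alpha<1$. Summing over $u\in N(v)$ gives $\Expc{d_{s+1}(v)\mid\mathcal{F}_s}\le \alpha\, d_s(v)+O(0.001)$, i.e.\ $d_s(v)$ contracts geometrically down to $O(1)$ as long as non-golden rounds persist. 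The second piece governs $\ell_s(v)$: when $d_s(v)\le 0.02$ and $\ell_s(v)>1$ (the only remaining way to be non-golden), the probability that $v$ hears any beep is at most $d_s(v)\le 0.02$; under the conditioning $v$ also cannot beep in a silent neighborhood, so with constant probability neither $v$ nor any neighbor beeps, forcing $\ell_{s+1}(v)=\max\{\ell_s(v)-1,1\}$. This gives a per-round downward drift of order $1$ on $\ell_s(v)$ until $\ell_s(v)=1$.

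I would combine these two estimates into a single bounded potential, e.g.\ $\Phi_s = c_1\cdot\max\{\log_2 d_s(v),0\} + c_2\cdot \ell_s(v)$, which lies in $[0,O(\ellmax(v))]$ and changes by $O(1)$ per round. Each non-golden non-platinum round decreases $\Expc{\Phi_{s+1}-\Phi_s\mid\mathcal{F}_s}$ by a positive constant $\delta$. A bounded-difference martingale argument (Azuma--Hoeffding applied to $\Phi_s - \Phi_t + \delta\cdot(\text{\# non-golden non-platinum rounds in }[t,s])$) then shows that if the number of non-golden rounds exceeds $0.95k$, then $\Phi$ must have failed to drain over at least $0.95k\ge 66\ellmax(v)$ rounds, an event of probability $e^{-\Omega(k)}$. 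Choosing the constants $c_1,c_2,\delta$ suitably yields $e^{-k/100}$.

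The main obstacle I foresee is handling the conditioning cleanly: $\{\tau^{(v)}(t)>k\}$ is a global event that couples randomness across all rounds and all vertices in $N^+(v)$, so one cannot simply take ``conditional expectations'' at each step. I would work with the joint-probability form of the statement directly, using a stopping-time argument in which the martingale is frozen the first moment a prominent vertex appears in $N^+(v)$: the drift estimates above hold on the unstopped paths (since they use only local one-step structure given $\mathcal{F}_s$ plus the implication ``if $u$ beeps and the process is still alive at $s+1$ then some neighbor of $u$ beeps''), and the stopped process enjoys the same concentration bounds. A secondary delicacy is that while $d_s(v)$ is still shrinking, $\ell_s(v)$ can drift upward; the coefficients $c_1,c_2$ in $\Phi$ must be chosen so that the worst-case trade-off between a $\log_2 d$ drop and an $\ell$ rise still leaves $\Phi$ with a strictly negative drift.
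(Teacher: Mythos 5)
Your overall route is a legitimate alternative to the paper's. The paper isolates the same two one-step facts you use (\cref{claim:wrong_moves}: contraction of $d_t(v)$ when the light mass is negligible, and downward drift of $\ell_t(v)$ when $d_t(v)\le 0.02$), but then converts them into a window-level statement by a Chernoff bound on the number of ``wrong moves'' followed by a purely deterministic counting argument over the partition of $[t,t+k]$ into the sets $D,E,F,H$; you instead package the two drifts into one potential and invoke Azuma on the stopped supermartingale. Both need $k\ge 70\ellmax(v)$ for the same reason: the initial ``budget'' $\Phi_t=O(\ellmax(v))$ (in the paper, the $2\log\deg(v)$ decreases needed to bring $d$ below threshold plus the $\ellmax(v)$ decreases needed to bring $\ell$ to $1$) must be a small fraction of $k$. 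Your handling of the conditioning on $\{\tau^{(v)}(t)>k\}$ via a stopping time is sound and is essentially how the paper's statement ``$G_{t,k}(v)\le 0.05k\ \wedge\ \tau^{(v)}(t)>k$'' should be read.

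There is, however, a concrete gap in the drift claim for the potential as you wrote it. Take $\ell_s(v)=1$, $d_s(v)=0.5$, $d_s^L(v)\le 0.001$: this round is non-golden (condition (a) fails since $d_s(v)>0.02$, condition (b) fails since $d_s^L(v)\le 0.001$), yet $\Phi_s=c_1\max\{\log_2 d_s(v),0\}+c_2\ell_s(v)=c_2$ \emph{cannot decrease}: on unstopped paths $\ell_{s+1}(v)\in\{1,2\}$ because the update floors $\ell$ at $1$, and $\max\{\log_2 d,0\}$ is already $0$ and stays $0$ (or increases) since $d_{s+1}(v)\le 2d_s(v)\le 1$. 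So ``each non-golden non-platinum round decreases $\Expc{\Phi_{s+1}-\Phi_s\mid\mathcal F_s}$ by $\delta>0$'' is false for this $\Phi$: the truncation point of the logarithm ($d=1$) does not match the golden threshold ($d=0.02$), leaving a dead zone $0.02<d_s(v)\le 1$, $\ell_s(v)=1$ in which the potential is blind to the contraction that is actually happening. The fix is to truncate at the threshold instead, e.g.\ $\max\{\log_2(d_s(v)/0.02),0\}$ (or $\max\{\log_2(50\,d_s(v)),0\}$), for which your multiplicative contraction $\Expc{d_{s+1}(v)\mid\mathcal F_s}\le\alpha d_s(v)+O(\eta_s(v))$ with $\alpha<1$ and $\eta_s(v)\le 0.0001$ does give a strictly negative expected increment throughout $d_s(v)>0.02$ (Jensen takes care of passing to the logarithm). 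With that repair, and the usual balancing of $c_1,c_2$ against the at-most-$1$ adversarial moves of the other coordinate, your Azuma argument goes through and yields the stated $e^{-\Omega(k)}$ bound; pinning down the exact constants $0.05$ and $1/100$ would still require the kind of explicit bookkeeping the paper does.
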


We notice that, if round $t$ is not a platinum round of $v$, every round $s \in [t,\tau^{(v)}(t)]$ is also not a platinum round of $v$, since the only way a vertex in $N^+(v)$ can take a negative level is by beeping without beeping neighbors, and $\tau^{(v)}(t)$ is the first round that happens. 
The proof of \cref{lem:golden_rounds} relies on the following result.

\begin{lemma}
Let $v\in V$ and $t > \max_{w \in V}\ellmax(w)$ such that round $t$
is not a platinum round of $v$ and $\eta_t(v) \leq 0.0001$.
\begin{enumerate}[(a)]
\item If $d_t(v) \leq 0.02$, then $\ell_{t+1}(v)\leq \max\{1,\ell_t(v) - 1\}$ with probability at least $0.97$.
\item If $d_t(v) >0.01$ and $d_t^L(v) <0.01d_t(v)$, then with
probability at least 0.97, we have that $d_{t+1}(v) <0.6d_t(v)$ or that
$t+1$ is a platinum round for $v$.
\end{enumerate}
\label{claim:wrong_moves}
\end{lemma}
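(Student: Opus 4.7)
My plan is to handle parts (a) and (b) separately: (a) via a direct union bound, and (b) via a Markov argument applied not to $d_{t+1}(v)$ itself but to a carefully chosen ``failure mass'' on the heavy neighbors of $v$.

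For (a), the union bound over $v$'s neighbors gives $\Probc{\text{no neighbor of }v\text{ beeps in round }t\mid\mathcal{F}_t}\ge 1-d_t(v)\ge 0.98$. On this event, whether or not $v$ itself beeps, the update rule of \cref{alg:jsx-ss} produces $\ell_{t+1}(v)\in\{-\ellmax(v),\,\max\{\ell_t(v)-1,1\}\}$, and both values are at most $\max\{1,\ell_t(v)-1\}$ because $-\ellmax(v)\le 1$. This proves (a).

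For (b), I would rely on two structural observations. First, since $t$ is not platinum for $v$, every $u\in N^+(v)$ has $\ell_t(u)\ge 1$ and hence $p_t(u)\le 1/2$. Second, every heavy neighbor $u\in H_t\cap N(v)$ satisfies $\Probc{u\text{ receives no beep in round }t\mid\mathcal{F}_t}\le e^{-10}$: either $\mu_t(u)\le 0$, so some $w\in N(u)$ has $\ell_t(w)\le 0$ and beeps with probability $1$ (giving probability $0$), or $\mu_t(u)>0$ and $d_t(u)>10$, giving $\prod_{w\in N(u)}(1-p_t(w))\le e^{-d_t(u)}\le e^{-10}$. I would then introduce the random variable
\[
A \;=\; \sum_{u\in H_t\cap N(v)} p_t(u)\,\mathbf{1}\bigl\{p_{t+1}(u)>p_t(u)/2\bigr\}.
\]
Since $p_{t+1}(u)>p_t(u)/2$ forces $u$ to have received no beep (otherwise $p_{t+1}(u)=p_t(u)/2$), the indicator is at most $\mathbf{1}\{u\text{ receives no beep}\}$, so $\Expc{A\mid\mathcal{F}_t}\le e^{-10}\,d_t^H(v)\le e^{-10}\,d_t(v)$, and Markov then gives $\Probc{A\ge 0.02\,d_t(v)\mid\mathcal{F}_t}\le 50\,e^{-10}<0.03$.

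The remaining, deterministic step is to show that $\{A<0.02\,d_t(v)\}\cap\{t+1\text{ not platinum for }v\}$ implies $d_{t+1}(v)<0.6\,d_t(v)$. When $t+1$ is not platinum for $v$, no $u\in N(v)$ beeped without hearing, so every $u\in N(v)$ either received a beep (giving $p_{t+1}(u)\le p_t(u)/2$) or was silent and heard nothing (giving $p_{t+1}(u)\le 2p_t(u)$ if $p_t(u)>0$, and $p_{t+1}(u)\le 2\cdot 2^{-\ellmax(u)}$ if $p_t(u)=0$). Splitting $N(v)$ by heavy/light and by whether $p_t(u)=0$, these per-vertex bounds aggregate to
\[
d_{t+1}(v)\;\le\;\tfrac{1}{2}\,d_t^H(v)+\tfrac{3}{2}\,A+2\,d_t^L(v)+2\,\eta_t(v),
\]
and substituting $A<0.02\,d_t(v)$, $d_t^L(v)<0.01\,d_t(v)$, and $\eta_t(v)\le 0.0001<0.01\,d_t(v)$ (using $d_t(v)>0.01$) yields $d_{t+1}(v)<0.57\,d_t(v)$. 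The main obstacle I anticipate is exactly this: a direct Markov on $\Expc{d_{t+1}(v)\,\mathbf{1}_{t+1\text{ not plat}}}\approx\tfrac{1}{2}d_t(v)$ only rules out $d_{t+1}(v)\ge 0.6\,d_t(v)$ with probability $\sim 1/6$, which is far too weak; the key maneuver is to transfer the Markov argument onto the auxiliary ``bad mass'' $A$, which inherits the sharp $e^{-10}$ factor from the heaviness condition, and then close the argument via the deterministic decomposition above.
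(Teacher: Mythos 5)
Your proof is correct and takes essentially the same route as the paper's: part (a) rests on the event that no neighbor of $v$ beeps, which you bound by $1-d_t(v)\ge 0.98$ via a union bound where the paper uses the product bound $\prod_{u\in N(v)}(1-p_t(u))\ge 4^{-d_t(v)}\ge 0.97$, and part (b) applies Markov's inequality to the weighted no-beep mass on heavy neighbors — your $A$ is, up to a factor $2$ and the harmless replacement of $\mathds{1}_{B_t(u)=0}$ by the smaller indicator $\mathds{1}\{p_{t+1}(u)>p_t(u)/2\}$, exactly the paper's $\sum_{u\in N(v)\cap H_t}2p_t(u)\mathds{1}_{B_t(u)=0}$ — before closing with the same deterministic per-vertex update bounds (halving on a received beep, at most doubling otherwise, and the $2\eta_t(v)$ term for non-stable neighbors at maximal level). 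I see no gaps; your constants ($50e^{-10}<0.03$ and $d_{t+1}(v)<0.57\,d_t(v)$) check out.
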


\begin{proof}
We fix the execution up to the end of round $t$, so we do not have to condition probabilities on $\mathcal{F}_t$. In what follows, we prove separately the two statements.

We prove (a) first. 
Since $d_t(v)=\sum_{u \in N(v)}p_t(u)\leq 0.02$ it follows that $p_t(u) \leq \frac{1}{2}$ for all $u \in N(v)$. Thus, the probability that no neighbor of $v$ beeps is at least $\prod_{u \in N(v)}\left(1-p_t(u)\right) \geq 4^{-d_t(v)} \geq 0.97$. Hence, $\Prob[\ell_{t+1}(v) \leq \max\{\ell_t(v) - 1,1\}]\ge 0.97$.

Next we prove (b).  Since $t$ is not a platinum round of  $v$, we have that for each $u \in N^+(v)$, $\ell_t(u) \geq 1$. Moreover, we notice that there may be in round $t$ a beeping vertex $u \in N^+(v)$ that does not receive a signal, and so $\ell_{t+1}(u) =-\ellmax(u) \leq 0$.

    For any vertex $u \in N^+(v)$, we have the following upper bounds for $p_{t+1}(u)$ (recall that $\ell_t(u) >0$ since $t$ is not a platinum round of $v$):
    \[
        p_{t+1}(u) \leq
        \begin{cases}
            2^{-\ellmax(u) + 1} &\text{if }\ell_t(u)=\ellmax(u) \text{ and } u \not \in S_t\\
            0 &\text{if }\ell_t(u)=\ellmax(u) \text{ and } u \in S_t \\
            \frac{p_t(u)}{2} &\text{if } B_t(u) \geq 1 \text{ and } \ell_t(u) \neq \ellmax(u) \\
            2p_t(u) &\text{if } B_t(u) = b_t(u) = 0 \text{ and } \ell_t(u) \neq \ellmax(u) \\
            1 &\text{if } B_t(u) = 0, b_t(u) = 1 \text{ and } \ell_t(u) \neq \ellmax(u)
        \end{cases}  
    \]
    The last case, i.e., when $B_t(u)=0$ and $b_t(u)=1$ implies that $t+1$ is a platinum round for $v$, and that $\ell_{t+1}(u)=-\ellmax(u)$.
    Define $J_{t+1}(v)$ the set of such vertices, i.e., the set of vertices in $N(v)$ beeping in round $t$ without beeping neighbors. Then,
    \begin{align*}
        d_{t+1}(v) &\leq \sum_{\substack{u \in N(v)\setminus S_t: \\ \ell_t(u)= \ellmax(u)}}2^{-\ellmax(u)+1} + \sum_{\substack{u \in N(v): \\ B_t(u) \geq 1 \\ \ell_t(u)\not= \ellmax(u)}}\frac{p_t(u)}{2} + \sum_{\substack{u \in N(v): \\ B_t(u)=b_t(u)=0}}2p_t(u) + J_{t+1}(v)
        \\
        & \leq 2\eta_t(v) + \sum_{u \in N(v) \cap H_t}p_t(u)\left(\frac{1}{2}+2\cdot \mathds{1}_{B_t(u)=0}\right) + \sum_{u \in N(v) \cap L_t}2p_t(u) + J_{t+1}(v).
    \end{align*}
We notice that, since $d_t^L(v) = \sum_{u \in N(v)\cap L_t}p_t(u) \leq 0.01d_t(v)$ and $\eta_t(v) \leq 0.001$, we have that
\[d_{t+1}(v) \leq 0.0002 + 0.02d_t(v) + J_{t+
1}(v) + \sum_{u \in N(v) \cap H_t}p_t(u)\left(\frac{1}{2}+2\cdot \mathds{1}_{B_t(u)=0}\right). \]
We want to bound, for each $u \in N(v) \cap H_t$, the probability that $B_t(u)=0$. Since $u \in N(v) \cap H_t$ and $\ell_t(u) \geq 1$, then $d_t(u) \geq 10$ or $\mu_t(u)=0$. In the latter case, we know that $u$ has some neighbor $u' \in N(u)$ with $p_t(u')=1$. Hence, we have that $\Probc{B_t(u)=0}=0$. In the former case, we have that none of $u$'s neighbors is beeping with probability at most
\[\prod_{w \in N(u)}(1-p_t(w)) \leq e^{-d_t(u)} \leq e^{-10}.\]
Hence, we have that, for each $u \in N(v) \cap H_t$, $\Probc{B_t(u)=0} \leq e^{-10}$. So,
\[\Expc{\sum_{u \in N(v) \cap H_t}2p_t(u) \mathds{1}_{B_t(u)=0}} \leq \sum_{u \in N(v) \cap H_t}2p_t(u)e^{-10}. \]
 Markov's inequality implies that $\sum_{u \in N(v) \cap H_t}2p_t(u) \mathds{1}_{B_t(u)=0} \leq 0.01 \sum_{u \in N(v) \cap H_t}2p_t(u)$ with probability at least $1-\frac{e^{-10}}{0.01} \geq 0.97$.
Thus, with probability at least $0.97$, we have that
\begin{equation}
    d_{t+1}(v) \leq 0.0002 + 0.02d_t(v) + J_{t+1}(v) + 0.5d_t(v) + 0.02d_t(v) \leq 0.6d_t(v) + J_{t+1}(v),
    \label{eq:bound_d_t}
\end{equation}
where the last inequality follows by noticing that $d_t(v) >0.01$ and hence $0.0002 < 0.02d_t(v)$. This yields the lemma, since  \eqref{eq:bound_d_t} implies that either $d_{t+1}(v)<0.6d_t(v)$, or $J_{t+1}(v) >0$, and hence $t+1$ is a platinum round for $v$.
\end{proof}

We are now ready to prove \cref{lem:golden_rounds}.

\begin{proof}[Proof of \cref{lem:golden_rounds}]
Fix a vertex $v \in V$.    We consider $k \geq 70\ellmax(v)$ consecutive rounds, starting from a round $t$ which is not a platinum round of $v$. Since $\eta_t(v)$ is decreasing in $t$, in all rounds $t +m$, $m\geq 0$, we have $\eta_{t+m}(v)<0.0001$. We consider the following sets of rounds
\begin{gather*}
D_{t,k}(v)=\{0\leq m\leq k: d_{t+m}(v)>0.2\}    
\\
E_{t,k}(v)=\{0 \leq m \leq k:d_{t+m}(v)>0.1 \text{ and } d_{t+m}^L(v) \geq 0.1d_{t+m}(v)\}
\\
F_{t,k}(v)=\{0 \leq m \leq k:d_{t+m}(v)>0.1 \text{ and }d_{t+m}^L(v)<0.1d_{t+m}(v)\}
\\
H_{t,k}(v)=\{0 \leq m \leq k: d_{t+m}(v) < 0.2 \text{ and } \ell_{t+m}(v) \leq 1\}.
\end{gather*}
We say that in some round $t'$ we have a \emph{wrong move} if none of the following conditions occurs
\begin{enumerate}[(a)]
        \item $t$ or $t+1$ is a platinum round of $v$;
        
        \item $\eta_t(v)>0.0001$;
        
        \item $d_t(v)\leq 0.01$ or $d_t(v) > 0.02$;

        \item $d_t^L(v)\geq 0.01d_t(v)$;

                \item $d_{t+1}(v)<0.6d_t(v)$;
        
        \item  $\ell_{t+1}(v)\leq \max\{
        \ell_t(v)-1,1\}$;
        
    \end{enumerate}

From \cref{claim:wrong_moves}, we have that a vertex makes a wrong move with probability at most 0.03. Since the randomness of each round is independent of the others, we know by Chernoff's bound, that in the rounds $\{t,t+1,\dots, t+k\}$ there are at most $0.04k$ wrong moves with probability at least $1-e^{-k/100}$, and we will refer to this event with $B$.

In the rest of the proof, we assume that $B$ happens, and we will see that it implies, deterministically, that $\tau^{(v)}(t) \leq k$ or that $G_{t,k}(v) \geq 0.1k$. So, we assume that $\tau^{(v)}(t) > k$ and we will prove that, under event $B$, this implies that $G_{t,k}(v) \geq 0.1k$. We remark that, if $\tau^{(v)}(t) >k$, for each $0\leq m\leq k$, we have that $d_{t+m+1}(v) \leq 2d_{t+m}(v)$. 

In what follows, we will prove that:
\begin{enumerate}[(i)]
    \item if $E_{t,k}(v)<0.05k$, then $D_{t,k}(v) < 0.25k$
    \item if $D_{t,k}(v)<0.25k$, then $H_{t,k}(v)>0.28k$.
\end{enumerate}

We prove (i) first.
We denote with $D_{t,k}'(v)$ the set $\{0\leq m \leq k: d_{t+m}(v)>0.1\}$ and we notice that $D'_{t,k}(v) = E_{t,k}(v) \cup F_{t,k}(v)$. Also let $h=|D_{t,k}(v)|$ and $h'=|D'_{t,k}(v)|$. Since the number of wrong moves is bounded by $0.04k$, and since $E_{t,k}(v) <0.05k$ the number of rounds in $D'_{t,k}(v)$ in which $d_{t+m}(v)$ can double is at most $0.09k$, and in the rest of the rounds it will decrease of a factor of $0.6$. 

In order to keep $d_{t+m}(v)>0.2$, in a consecutive interval of rounds in $D_{t,k}'(v)$, the number of increasing moves must be at least $\log_{0.5}(0.6) >0.7$ times the number of decreasing moves, and at most $\log_{5/3}(10d_t(v)) \leq 2 \log(10\deg(v)) \leq 2 \log \deg(v) + 8$ decreases are used to decrease the initial value of $d_t(v)$ below 0.1. Hence, the total number of rounds in $D_{t,k}(v)$ is at most
\begin{align*}
    0.09k + \frac{0.09}{0.7}k + 2\log(\deg(v)) + 8 &\leq 0.22k + 2\log \deg (v) + 8 \\ &\leq 0.22k + 2\ellmax(v) 
    \\& 
    \leq 0.22k + 0.03k
    \\& 
    = 0.25k.
\end{align*}

Next we prove (ii).
Since $|D_{t,k}(v)| \leq 0.25k$, the set $D_{t,k}^C(v)=\{0 \leq m \leq k:d_{t+m}(v)\leq 0.2\}$ contains at least $0.75k$ rounds. The number of wrong moves is bounded by $0.04k$, and in rounds $D_{t,k}^C(v)$ a wrong moves implies that $\ell_{t+m+1}(v) = \min\{\ell_{t+m}(v) + 1, \ellmax(v)\}$. Moreover, we have that in the rounds $D_{t,k}(v)$,
$\ell_{t+m+1}(v) \leq \min\{\ell_{t+m}(v)+1,\ellmax(v)\}$ is satisfied. Hence, $\ell_t(v)$ can increase in at most $|D_{t,k}(v)| + 0.04k \leq 0.29k$ rounds. The rounds in $D_{t,k}^C(v)$ in which no wrong move occurred are such that $\ell_{t+m+1}(v) = \max \{1,\ell_{t+m}(v)-1\}$, since we assumed that $\tau^{(v)}(t)> k$. Since $D_{t,k}^C(v)$ has at least $0.75k$ elements, and since the number of wrong moves is bounded by $0.04k$, the number of moves in which $\ell_{t+m}(v)$ decreases is at least $0.75k-0.04k= 0.71k$. Since the number of rounds in which $\ell_{t+m}(v)$ increases is at most $0.29k$, we have that the number of increases is at least $2.4$ times the number of decreases.

Denote the number of rounds in $D_{t,k}^C(v)$ where $\ell_{t+m}(v)$ decreases by $U$ and those where it increases by $D$. Thus, $D + U \geq 0.75k$ and $U \geq 2.4D$. In the worst case, each round with an increase follows a round with a decrease. Then, we still have $0.75k - 2D$ rounds with an increase left. Then, $0.75k-2D = U - D \geq 0.58U \geq
0.3k$. As it takes at most $\ellmax(v)$ for $p_{t+m}(v)$ to reach 1/2 we can say that, since $k \geq 70\ellmax(v)$,
we have at least $0.3k - \ellmax(v) > 0.28k$ rounds where $\ell_{t+m}(v)=1$ and $d_{t+m}(v) <0.2$, hence $H_{t,k}(v) > 0.28k$.
\end{proof}

\subsubsection{From Golden to Platinum Rounds}

We first notice that, for each golden round $t$ of $v$, round $t+1$ is platinum for $v$ with constant probability. Indeed, we have the following lemma. 

\begin{lemma}
\label{lem:golden_rounds_joining} Let $t > \max_{w\in V}\ellmax(w)$ be a non-platinum round of $v$, and consider $\tau^{(v)}(t)$ as in \cref{lem:lower_bound_platinum}. Then, we have that, for each $t\leq s < \tau^{(v)}(t)$ which is golden for $v$, $\Probc{\tau^{(v)}(t) = s+1 \mid \mathcal{F}_s} \geq \gamma$, where $\gamma \geq e^{-27} $.
\end{lemma}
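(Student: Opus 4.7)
The plan is to do a case analysis on \cref{def:golden} and show, in either case, that some vertex $u\in N^+(v)$ beeps in round $s$ with no beeping neighbor, with probability at least $e^{-27}$. Such an event resets $\ell_{s+1}(u)=-\ellmax(u)$, so $u$ is prominent in round $s+1$, making $s+1$ platinum for $v$; combined with $s<\tau^{(v)}(t)$ this forces $\tau^{(v)}(t)=s+1$. A standing fact used throughout is that, since $s<\tau^{(v)}(t)$ and $t$ is non-platinum for $v$, no vertex in $N^+(v)$ is prominent in any round of $\{t,\dots,s\}$, so $\ell_s(u)\geq 1$ and $p_s(u)\leq 1/2$ for every $u\in N^+(v)$. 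This lets me freely apply the elementary inequality $1-x\geq 4^{-x}$ on $[0,1/2]$.

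In Case~(a), where $\ell_s(v)\leq 1$ and $d_s(v)\leq 0.02$, the candidate is $v$ itself: $p_s(v)\geq 1/2$, and $\prod_{u\in N(v)}(1-p_s(u))\geq 4^{-d_s(v)}\geq 4^{-0.02}$, so the probability that $v$ beeps alone in round $s$ is at least $\tfrac{1}{2}\cdot 4^{-0.02}$, well above $e^{-27}$.

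In Case~(b), where $d_s^L(v)>0.001$, the candidates are the light neighbors $u\in N(v)\cap L_s$. For each such $u$, the defining condition $\mu_s(u)>0$ forces $p_s(w)\leq 1/2$ for every $w\in N(u)$, and since $\ell_s(u)\geq 1$ rules out the alternative clause, lightness also gives $d_s(u)\leq 10$. Hence the event $E_u=\{b_s(u)=1\wedge B_s(u)=0\}$ satisfies $\Probc{E_u\mid\mathcal{F}_s}\geq p_s(u)\cdot 4^{-10}$. Letting $X=\sum_{u\in N(v)\cap L_s}\mathds{1}_{E_u}$, we get $\Expc{X\mid\mathcal{F}_s}\geq 4^{-10}d_s^L(v)$. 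For the second moment, $\Probc{E_u\cap E_{u'}\mid\mathcal{F}_s}=0$ whenever $u\sim u'$ (each would hear the other), while for $u\not\sim u'$ the independence of the per-vertex beeps gives $\Probc{E_u\cap E_{u'}\mid\mathcal{F}_s}\leq p_s(u')\Probc{E_u\mid\mathcal{F}_s}$. Summing over pairs yields $\Expc{X^2\mid\mathcal{F}_s}\leq \Expc{X\mid\mathcal{F}_s}(1+d_s^L(v))$, and Paley--Zygmund then gives
\begin{equation*}
\Probc{X\geq 1\mid\mathcal{F}_s}\;\geq\;\frac{(\Expc{X\mid\mathcal{F}_s})^2}{\Expc{X^2\mid\mathcal{F}_s}}\;\geq\;\frac{4^{-10}\,d_s^L(v)}{1+d_s^L(v)}\;\geq\;e^{-27},
\end{equation*}
where the last inequality is a straightforward numeric check using $d_s^L(v)>0.001$ (split at $d_s^L(v)\geq 1$ vs.\ $d_s^L(v)<1$).

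The main obstacle is Case~(b): $d_s^L(v)$ may be much larger than $1$ (many small-probability beepers), so a naive union bound over the $E_u$'s is useless, and the events are neither independent nor mutually exclusive. The second-moment method sidesteps this precisely because adjacent pairs contribute nothing to the variance (they rule each other out) and non-adjacent pairs cost only a single factor of $p_s(u')$. The remaining subtlety is making sure that the definition of ``light'' combined with $s$ being non-platinum for $v$ simultaneously yields both $p_s(w)\leq 1/2$ on $N(u)$ and $d_s(u)\leq 10$, which together supply the $4^{-10}$ factor.
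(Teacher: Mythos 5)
Your proof is correct, and it follows the paper's overall structure — a case split on the two clauses of the golden-round definition, in each case exhibiting a vertex of $N^+(v)$ that beeps alone in round $s$ with probability at least $\gamma$, which resets its level to $-\ellmax$ and makes round $s+1$ platinum. Case (a) is essentially identical to the paper's (you use $\prod_u(1-p_s(u))\geq 4^{-d_s(v)}$ where the paper uses Markov on $B_s(v)$; both are immediate). The genuine divergence is in case (b): the paper conditions on the event that \emph{some} light neighbor beeps (probability at least $1-e^{-d_s^L(v)}\geq 1-e^{-0.001}$), takes the first such $w_j$ in a fixed ordering, and argues that the remaining neighbors of $w_j$ are all silent with probability at least $e^{-2d_s(w_j)}\geq e^{-20}$, yielding $e^{-20}(1-e^{-0.001})>e^{-27}$. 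You instead run a second-moment (Paley--Zygmund) argument on $X=\sum_u\mathds{1}_{E_u}$, using that adjacent pairs of light neighbors contribute zero to $\Expc{X^2}$ (they silence each other) and non-adjacent pairs cost only a factor $p_s(u')$ by independence; this gives $\Probc{X\geq 1}\geq 4^{-10}d_s^L(v)/(1+d_s^L(v))$, which clears $e^{-27}$ with room to spare (it is about $10^{-9}$ versus the paper's roughly $2\cdot 10^{-12}$). Both routes are sound and of comparable length; yours trades the mild bookkeeping of the ``first beeper'' conditioning for a variance computation, and incidentally shows the constant $\gamma$ in the lemma could be taken noticeably larger. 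All the supporting facts you invoke — $\ell_s(u)\geq 1$ on $N^+(v)$ because no round in $[t,s]$ is platinum, hence $d_s(u)\leq 10$ for light $u\in N(v)$, and $p_s(w)\leq 1/2$ on $N(u)$ from $\mu_s(u)>0$ — are exactly the ones the paper uses, so the reduction to ``some vertex beeps alone'' is airtight.
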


\begin{proof}
Since $t \leq s < \tau^{(v)}(t)$, $s$ is not a platinum round of $v$, every vertex $u \in N^+(V)$ is such that $\ell_s(v) \geq 1$. In what follows, we prove that, with constant probability $\gamma>0$, during round $s$, there is a vertex $u \in N^+(v)$ such that $B_s(u)=0$ and $b_s(u)=1$. This implies that $\ell_{s+1}(u)=-\ellmax(u)$ and that $\mu_{s+1}(u)>0$, hence that $s+1$ is platinum for $v$.
Since $s$ is golden for $v$, we have that part (a) or (b) of \cref{def:golden} holds.

First, assume that (a) holds, thus $\ell_s(v) \leq 1$ and $d_s(v) \leq 0.02$. In this case, with probability at least $0.48$, we have that $B_s(v)=0$ and $b_s(v)=1$ and so $s+1$ is platinum for $v$. Indeed, the expected number of beeping neighbors of $v$ during round $s$ is $d_s(v) \leq 0.02$. Therefore, for Markov's inequality, $\Probc{B_s(v)\geq 1}\leq 0.02$, and $v$ beeps with probability at least $1/2$, and then the level of $v$ is updated to zero with probability  at least $\frac{1}{2}\cdot 0.98 > 0.48$.

We now assume that round $s$ satisfies (b), therefore that $d_s^L(v) \geq 0.001$. We will prove that, in such types of rounds, with probability at least a constant $\gamma$, there is a beeping vertex $u \in N(v)$ with no beeping neighbors during round $s$. 
Let $k = |N(v) \cap L_s| $ be the number of light neighbors of $v$, and denote $\{w_1,\dots,w_k\}=N(v)\cap L_s$. We remark that all  light vertices $w_i$ are such that $\ell_s(u)>0$ for each $u \in N(w_i)$ and $d_s(w_i) \leq \maxd$.  
We define $\mathcal{E}_i$ to be the event indicating that  vertex $w_i$ is beeping during round $s$. Let $\mathcal{E}=\cup_i \mathcal{E}_i$.
We have that,
\[\Pr\left[\mathcal{E}\right] \geq 1- \prod_{\substack{w \in N(v) \cap L_s}}\left(1-p_t(w)\right)\geq 1-e^{-\sum_{\substack{w \in N(v) \cap L_s}}p_s(w)} = 1-e^{-d_s^L(w)} \geq 1- e^{-0.001} .\]

Suppose that $\mathcal{E}$ occurs, and let $j$ be the smallest index such that $\mathcal{E}_j$ occurs, i.e., $\Bar{\mathcal{E}}_1\cap \Bar{\mathcal{E}}_2 \cap \cdots \cap \Bar{\mathcal{E}}_{j-1} \cap \mathcal{E}_j$ occurs. If $G_j = N(w_j) \setminus \{w_1,\dots,w_{j-1}\}$, then the probability that no neighbor of $w_j$ in $G_j$ beeps is at least
\[\prod_{u \in G_j}\left(1-p_s(u)\right)\geq \prod_{u \in N(w_j)}(1-p_s(u)) \geq e^{-2d_s(w)} \geq e^{-20}.\]
where the first inequality follows from the fact that, since $w_j$ is light, $\mu_s(w_j)>0$ and so each $u \in N(w_j)$ is such that $\ell_s(u) \geq 1$.
Combining this with the previous inequality, we have that a vertex $w \in N(v)$ with $d_s(w) \leq \maxd$ is beeping with no beeping neighbors with probability at least 
$e^{-20}(1-e^{-0.001}) > e^{-27}=\gamma$.
\end{proof}

\cref{lem:lower_bound_platinum} follows from \cref{lem:golden_rounds,lem:golden_rounds_joining}.

\begin{proof}[Proof of \cref{lem:lower_bound_platinum}]
We fix the execution up to the end of round $t$, so we do not have to condition on $\mathcal{F}_t$. We have that
\begin{align}
    &\Probc{\tau^{(v)}(t) >k} \notag \\
    &= \Probc{\tau^{(v)}(t) >k \wedge G_{t,k}(v) > 0.05k} + \Probc{\tau^{(v)}(t) > k \wedge G_{t,k}(v) \leq 0.05k}
    \notag
    \\ & \leq \Probc{\tau^{(v)}(t)>k \wedge G_{t,k}(v) >0.05k} + e^{-k/100}
    \label{eq:using_lb_golden}
    \\ & \leq (1-e^{-27})^{0.05k} + e^{-k/100}\label{eq:using_golden_and_platinum} \\ &
    \leq e^{-e^{-27}0.05k} + e^{-k/100}
     \leq e^{-e^{-29}k} + e^{-k/100}\leq e^{-e^{-30}k}, \notag
\end{align}
where \eqref{eq:using_lb_golden} follows from \cref{lem:golden_rounds}, and \eqref{eq:using_golden_and_platinum} follows from \cref{lem:golden_rounds_joining}.
\end{proof}

\subsection{Stopping Times for Platinum Rounds (Proof of \texorpdfstring{\cref{lem:char_platinum})}{Lemma Stopping Time Platinum}}
\label{sec:platinum}

\begin{proof}[Proof of \cref{lem:char_platinum}] 
We fix the execution up to the end of round $t$, so we do not have to condition probabilities on $\mathcal{F}_t$. 

We prove part (a) first.
Since $u \in \Promi_t\setminus S_t$, we have $\ell_t(u)\leq 0$. Since $t >\max_{w \in V} \ellmax(w)$ \cref{fact:no_adjacent_negative} implies that $\mu_t(u)>0$, i.e., $\ell_t(w)>0$ for all $w\in N(v)$. By \cref{fact:no_beeping_neighbors} there exist a round $t'\leq t$ with $\ell_{t'}(u)= -\ellmax(u)$ and $t-\ellmax(u) \le t'$. Thus, each neighbor $w$ of $u$ incremented its level during the rounds $t', t'+1, \ldots, t$ or the level of $w$ reached $\ellmax(u)$.  Let $\ell=t-t'$. Thus, $\ell_t(w) \ge \min\{\ellmax(w), \ellmax(u)-\ell\}$. Hence, if $\ell_t(w)< \ellmax(w)$ then $p_t(w) \leq 2^{-(\ellmax(u)-\ell)}$. This yields
\[d_t(u) = \sum_{w \in N(u)\setminus S_t} p_t(w) \leq \sum_{w \in N(u)\setminus S_t}2^{-(\ellmax(u)-\ell)},\]
and also that, in the subsequent $\ell$ rounds, vertex $u$ is beeping and the level of each of $u$'s neighbors increases in each round. Therefore, we have $\ell_{t+\ell}(w) \geq \min\{\ellmax(w),\ellmax(u)\}$  for each $w \in N(u)$, and, moreover
\[d_{t+\ell}(u)= \sum_{\substack{w \in N(u)\setminus S_t:\\ \ellmax(w)> \ellmax(u)}}2^{-\ellmax(u)} = \eta'_t(u).\]

We notice that, if $\ellmax(u) \geq \ellmax(w)$ for each $w \in N(u)$, then we have that $d_{t+\ell}(u)=0$ and hence $\Probc{\sigma^{(u)}(t) = \sigmastable^{(u)}(t) \wedge \sigma^{(u)}(t) \leq \ellmax(u)}=1$, and this proves (a) when $\eta_t'(u)=0$.
If otherwise $\eta_t'(u)>0$, we can define
\[\ell' = \max_{w \in N(u)}\ellmax(w)-\ellmax(u)\]
which is such that $0<\ell' \leq \max_{w \in N(u)}\ellmax(w)$, and we have that $(\sigma^{(u)}(t) =\sigmastable^{(u)}(t)) \wedge (\sigma^{(u)}(t) \leq \ell')$ happens with probability at least
\begin{align*}
 \prod_{i=1}^{\ell'}\prod_{\substack{w \in N(u): \\ \ellmax(w)>\ellmax(u)}}\left(1-\frac{p_{t}(w)}{2^{i+\ellmax(u)}}\right) &\geq \prod_{i=1}^{\ell'} \prod_{\substack{w \in N(u): \\ \ellmax(w)>\ellmax(u)}} 3^{-p_{t}(w)2^{-(i+\ellmax(u))}}
 \\ &\geq \prod_{i=1}^{\ell'} 3^{-\eta_t'(u)2^{-i}}
 \\& 
 \geq 3^{-\eta'_t(u)},
\end{align*}
where the first inequality follows from the fact that, for each $w \in N(u)$, $p_t(w)/2^{\ellmax(u)} \leq 2^{-4}$.

Next we prove part (b).
We observe that, for each $x \geq 0$, we have that
\begin{align*}
    &\Probc{\sigma^{(u)}(t) \neq \sigmastable^{(u)}(t) \wedge \sigma^{(u)}(t) > \ellmax(u)+x} 
    \\ &
    \leq \Probc{\sigma^{(u)}(t)=\sigmafree^{(u)}(t) \mid \sigma^{(u)}(t)>\ellmax(u) + x}.
\end{align*}
Since we have that the event $\sigma^{(u)}(t)>\ellmax(u)+x$ implies that, for each $w \in N(u)$, $p_{t+\ellmax(u)+x}(w) \leq 2^{-(\ellmax(u)+x)}$, we have that
\begin{align*}
    \Probc{\sigma^{(u)}(t)= \sigmafree^{(u)}(t) \mid \sigma^{(u)}(t) > \ellmax(u)+x} &\leq 1-\prod_{w \in N(u)\setminus S_t}\left(1-2^{-(\ellmax(u)+x)}\right)
    \\ & \leq \sum_{w  \in N(u) \setminus S_t} 2^{-(\ellmax(u)+x)} 
    \\ & 
    \leq \eta'_t(u) 2^{-x}.
    \qedhere
\end{align*}
\end{proof}

\section{Two Beeping Channels (Proof of \texorpdfstring{\cref{thm:two_beeping_channels}}{Corollary 4})}
\label{sec:two_beeping_channels}

\begin{algorithm}[t]
    \state{$\ell \in\{0,\ldots, \ellmax(v)\}$}
    
    \BlankLine
    
    \InEachRound{$t=1,2,\ldots$}{

        \uIf{$0<\ell < \ellmax(v)$}
            {
            $beep_1 \,\gets\, \true$ with probability 
            $2^{-\ell}$            and $beep_1 \,\gets\, \false$ otherwise\;} 
        \lElse{$beep_1 \,\gets\, \false$}
        {$beep_2 \, \gets \, (\ell = 0)$\;}

        \BlankLine

        \lIf{$beep_1$ or $beep_2$}{send the corresponding signal to all neighbors}
        receive any signals sent by neighbors\;
        \BlankLine

        \BlankLine
        
        \uIf {$beep_2$ signal received}
        {$\ell \, \gets \, \ellmax(v)$;}
        \uElseIf {$beep_1$ signal received}
            {
            $\ell \,\gets\, \min\{\ell+1,\,\ellmax(v)\}$\;}
        \uElseIf {$beep_1$}
            {
            $\ell \,\gets\, 0$\;}
        \ElseIf{$beep_2 = \false$}
            {%
            $\ell \,\gets\, \max\{\ell-1, 1\}$}
    }
    
    \caption{Self-stabilizing version of Jeavons at al.'s algorithm with two beeping channels}
    \label{alg:jsx-ss-twochannels}
\end{algorithm}

One of the reasons that the MIS algorithm of Jeavons et al.\ \cite{JeavonsS016} is not self-stabilizing is the usage of phases consisting of two rounds. This allows a newly joined MIS vertex to signal this event to all neighbors in the subsequent round. Afterwards, these vertices become passive, i.e., no longer participate in the algorithm. Thus, a vertex $v$ that newly joined the MIS cannot be forced by a neighbor that is unaware that $v$ joined the MIS to leave the MIS again in the successive round. This problem can be circumvented if a second beeping channel is available, since we can let the vertices joining the MIS beep on that channel. Indeed, beginning in the round immediately following the round vertex $v$ joined the MIS, it signals in every round on this second channel. Neighbors of $v$ take this as an opportunity to become non-MIS vertices.  This second channel and the corresponding behavior can be easily integrated into \cref{alg:jsx-ss} (see \cref{alg:jsx-ss-twochannels}). The range of state variable $\ell(v)$ is restricted to $[0,\ellmax(v)]$, where $\ell(v)=0$ (resp.\ $\ell(v)=\ellmax(v)$) implies that $v$ is in the MIS (resp.\ not in the MIS). If a vertex $v$ which is enabled to signal with $beep_1$ receives neither signal from a neighbor then it sets $\ell(v)$ to $0$ and signals $beep_2$ in all future rounds. Vertices receiving a $beep_2$ signal set  $\ell(v)$ to $\ellmax(v)$ and refrain from beeping in future rounds. We end this section by giving the proof of \cref{thm:two_beeping_channels}.

\begin{proof}[Proof of \cref{thm:two_beeping_channels}]
We consider \cref{alg:jsx-ss-twochannels} and we notice that the update rule of $\ell$ of the non-stable vertices is the same of \cref{alg:jsx-ss}, and hence we can still use \cref{lem:lower_bound_platinum}, since it relies just on the update rule for $\ell$. Note the difference between the two algorithms: In \cref{alg:jsx-ss} if the level of a vertex is $0$ or lower then it is guaranteed that it sends a beep. In  \cref{alg:jsx-ss-twochannels} a vertex sends
a $beep_2$ signal if and only if its level is $0$. 

We will prove that the termination time of \cref{alg:jsx-ss-twochannels} is $O(\log n)$, if we take $\ellmax(v)\geq 2\log \deg_2(v) + 15$ for every $v \in V$.
We first notice that, in this case, we have that
\[\eta_1(v) \leq \sum_{u \in N(v)} 2^{-2\log \deg_2(u) - 15} \leq \sum_{u \in N(v)}\frac{1}{\deg^2(v)}2^{-15} \leq 0.0001,\]
and hence, for each $t \geq 1$ and $v \in V$ we have that $\eta_t(v) \leq 0.0001$.

We notice that, for a vertex $v \in V$ to stabilize in \cref{alg:jsx-ss-twochannels}, it suffice to have a platinum round for $v$. Hence, from \cref{lem:lower_bound_platinum} we have that each vertex stabilizes in time $O(\log n)$ with probability at least $1-1/n^2$. The theorem follows from the union bound applied to all vertices.
\end{proof}

\section{Conclusion}
\label{sec:conclusion}


In this paper, 
we describe a new randomized self-stabilizing MIS algorithm using the beeping model requiring each vertex to have only limited knowledge about the topology that comes close to the $O(\log n)$ time bound. The algorithm is motivated by the non self-stabilizing algorithm of Jeavons et al.\ \cite{JeavonsS016}. To transform it into a self-stabilizing algorithm we had to overcome two issues: Firstly, the known initial configuration and secondly, the phase concept. We prove that the global knowledge of the maximum degree is sufficient to obtain a $O(\log n)$ self-stabilizing algorithm. If we rely on the local knowledge of the vertex degree,  the algorithm stabilizes in time $O(\log n \cdot \log \log n)$. It is an open question if this upper bound is tight, or whether the analysis can be improved to obtain the upper bound $O(\log n)$.

We remark that, for a beeping model with two channels, we can easily implement the phases with two rounds with the presence of two beeping channels, and we prove that, in such a case, a self-stabilizing MIS algorithm terminating in $O(\log n)$ relies on the (almost) local knowledge of the 2-hop neighbors.
It is natural to ask whether the local knowledge can be completely removed, obtaining an algorithm for the beeping model (with one or two channels) that computes an MIS in a self-stabilizing way.



\bibliography{disc}

\begin{thebibliography}{10}

\bibitem{AfekABCHK13}
Yehuda Afek, Noga Alon, Ziv Bar{-}Joseph, Alejandro Cornejo, Bernhard Haeupler,
  and Fabian Kuhn.
\newblock Beeping a maximal independent set.
\newblock {\em Distributed Comput.}, 26(4):195--208, 2013.
\newblock \href {https://doi.org/10.1007/s00446-012-0175-7}
  {\path{doi:10.1007/s00446-012-0175-7}}.

\bibitem{AfekABHBB11}
Yehuda Afek, Noga Alon, Omer Barad, Eran Hornstein, Naama Barkai, and Ziv
  Bar-Joseph.
\newblock A biological solution to a fundamental distributed computing problem.
\newblock {\em Science}, 331(6014):183--185, 2011.
\newblock \href {https://doi.org/10.1126/science.1193210}
  {\path{doi:10.1126/science.1193210}}.

\bibitem{AlonBI86}
Noga Alon, L{\'{a}}szl{\'{o}} Babai, and Alon Itai.
\newblock A fast and simple randomized parallel algorithm for the maximal
  independent set problem.
\newblock {\em J. Algorithms}, 7(4):567--583, 1986.
\newblock \href {https://doi.org/10.1016/0196-6774(86)90019-2}
  {\path{doi:10.1016/0196-6774(86)90019-2}}.

\bibitem{BalliuBHORS21}
Alkida Balliu, Sebastian Brandt, Juho Hirvonen, Dennis Olivetti, Mika{\"{e}}l
  Rabie, and Jukka Suomela.
\newblock Lower bounds for maximal matchings and maximal independent sets.
\newblock {\em J. {ACM}}, 68(5):39:1--39:30, 2021.
\newblock \href {https://doi.org/10.1145/3461458} {\path{doi:10.1145/3461458}}.

\bibitem{CornejoK10}
Alejandro Cornejo and Fabian Kuhn.
\newblock Deploying wireless networks with beeps.
\newblock In {\em Proc.\ 24th International Symposium on Distributed Computing,
  {DISC}}, pages 148--162, 2010.
\newblock \href {https://doi.org/10.1007/978-3-642-15763-9\_15}
  {\path{doi:10.1007/978-3-642-15763-9\_15}}.

\bibitem{Dijkstra74}
Edsger~W. Dijkstra.
\newblock Self-stabilizing systems in spite of distributed control.
\newblock {\em Commun. {ACM}}, 17(11):643--644, 1974.
\newblock \href {https://doi.org/10.1145/361179.361202}
  {\path{doi:10.1145/361179.361202}}.

\bibitem{Dolev2000}
Shlomi Dolev.
\newblock {\em Self-Stabilization}.
\newblock {MIT} Press, 2000.

\bibitem{EmekK21}
Yuval Emek and Eyal Keren.
\newblock A thin self-stabilizing asynchronous unison algorithm with
  applications to fault tolerant biological networks.
\newblock In {\em Proc.\ 40th {ACM} Symposium on Principles of Distributed
  Computing, PODC}, pages 93--102. {ACM}, 2021.
\newblock \href {https://doi.org/10.1145/3465084.3467922}
  {\path{doi:10.1145/3465084.3467922}}.

\bibitem{emek2020}
Yuval Emek and Jara Uitto.
\newblock Dynamic networks of finite state machines.
\newblock {\em Theor. Comput. Sci.}, 810:58--71, 2020.
\newblock \href {https://doi.org/10.1016/J.TCS.2017.05.025}
  {\path{doi:10.1016/J.TCS.2017.05.025}}.

\bibitem{EmekW13}
Yuval Emek and Roger Wattenhofer.
\newblock Stone age distributed computing.
\newblock In {\em Proc.\ 32nd {ACM} Symposium on Principles of Distributed
  Computing, {PODC}}, pages 137--146, 2013.
\newblock \href {https://doi.org/10.1145/2484239.2484244}
  {\path{doi:10.1145/2484239.2484244}}.

\bibitem{Faour0GKR23}
Salwa Faour, Mohsen Ghaffari, Christoph Grunau, Fabian Kuhn, and V{\'{a}}clav
  Rozhon.
\newblock Local distributed rounding: {G}eneralized to {MIS}, matching, set
  cover, and beyond.
\newblock In {\em Proc.\ 34th {ACM-SIAM} Symposium on Discrete Algorithms,
  {SODA}}, pages 4409--4447, 2023.
\newblock \href {https://doi.org/10.1137/1.9781611977554.CH168}
  {\path{doi:10.1137/1.9781611977554.CH168}}.

\bibitem{Ghaffari16}
Mohsen Ghaffari.
\newblock An improved distributed algorithm for maximal independent set.
\newblock In {\em Proc.\ 27th {ACM-SIAM} Symposium on Discrete Algorithms,
  {SODA}}, pages 270--277, 2016.
\newblock \href {https://doi.org/10.1137/1.9781611974331.ch20}
  {\path{doi:10.1137/1.9781611974331.ch20}}.

\bibitem{Ghaffari17}
Mohsen Ghaffari.
\newblock Distributed {MIS} via all-to-all communication.
\newblock In {\em Proc.\ 36th {ACM} Symposium on Principles of Distributed
  Computing, {PODC}}, pages 141--149, 2017.
\newblock \href {https://doi.org/10.1145/3087801.3087830}
  {\path{doi:10.1145/3087801.3087830}}.

\bibitem{Ghaffari19}
Mohsen Ghaffari.
\newblock Distributed maximal independent set using small messages.
\newblock In {\em Proc.\ 30th {ACM-SIAM} Symposium on Discrete Algorithms,
  {SODA}}, pages 805--820, 2019.
\newblock \href {https://doi.org/10.1137/1.9781611975482.50}
  {\path{doi:10.1137/1.9781611975482.50}}.

\bibitem{GhaffariGR21}
Mohsen Ghaffari, Christoph Grunau, and V{\'{a}}clav Rozhon.
\newblock Improved deterministic network decomposition.
\newblock In {\em Proc.\ 32nd {ACM-SIAM} Symposium on Discrete Algorithms,
  {SODA}}, pages 2904--2923, 2021.
\newblock \href {https://doi.org/10.1137/1.9781611976465.173}
  {\path{doi:10.1137/1.9781611976465.173}}.

\bibitem{Giakkoupis2023}
George Giakkoupis and Isabella Ziccardi.
\newblock Distributed self-stabilizing {MIS} with few states and weak
  communication.
\newblock In {\em Proc.\ 42nd {ACM} Symposium on Principles of Distributed
  Computing, {PODC}}, pages 310--320, 2023.
\newblock \href {https://doi.org/10.1145/3583668.3594581}
  {\path{doi:10.1145/3583668.3594581}}.

\bibitem{JeavonsS016}
Peter Jeavons, Alex Scott, and Lei Xu.
\newblock Feedback from nature: Simple randomised distributed algorithms for
  maximal independent set selection and greedy colouring.
\newblock {\em Distributed Comput.}, 29(5):377--393, 2016.
\newblock \href {https://doi.org/10.1007/S00446-016-0269-8}
  {\path{doi:10.1007/S00446-016-0269-8}}.

\bibitem{Linial1987}
Nathan Linial.
\newblock Distributive graph algorithms - {G}lobal solutions from local data.
\newblock In {\em Proc.\ 28th Symposium on Foundations of Computer Science,
  FOCS}, pages 331--335, 1987.
\newblock \href {https://doi.org/10.1109/SFCS.1987.20}
  {\path{doi:10.1109/SFCS.1987.20}}.

\bibitem{Lotker2003}
Zvi Lotker, Elan Pavlov, Boaz Patt{-}Shamir, and David Peleg.
\newblock {MST} construction in {O}(log log n) communication rounds.
\newblock In {\em Proc.\ 15th Annual {ACM} Symposium on Parallelism in
  Algorithms and Architectures, SPAA}, pages 94--100, 2003.
\newblock \href {https://doi.org/10.1145/777412.777428}
  {\path{doi:10.1145/777412.777428}}.

\bibitem{Luby86}
Michael Luby.
\newblock A simple parallel algorithm for the maximal independent set problem.
\newblock {\em {SIAM} J. Comput.}, 15(4):1036--1053, 1986.
\newblock \href {https://doi.org/10.1137/0215074} {\path{doi:10.1137/0215074}}.

\bibitem{MoscibrodaW05a}
Thomas Moscibroda and Roger Wattenhofer.
\newblock Maximal independent sets in radio networks.
\newblock In {\em Proc.\ 24th {ACM} Symposium on Principles of Distributed
  Computing, {PODC}}, pages 148--157, 2005.
\newblock \href {https://doi.org/10.1145/1073814.1073842}
  {\path{doi:10.1145/1073814.1073842}}.

\bibitem{Peleg2000}
David Peleg.
\newblock {\em Distributed computing: A locality-sensitive approach}.
\newblock SIAM, 2000.

\end{thebibliography}

\appendix

\section*{APPENDIX}

\section{Tools}
\label{sec:app-tools}

\begin{lemma}
\label{lem:geometric_distribution}
Let $X_1, \dots, X_n$ be a sequence of i.i.d.\ geometric random variables with success probability $p$. Then, we have that
\[\Probc{\sum_{i=1}^n X_i \geq k} = \Probc{\text{Bin}(k,p) \leq n}.\]
\end{lemma}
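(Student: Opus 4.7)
The plan is to prove the identity by the classical negative-binomial/binomial duality: reinterpret the sum of geometric waiting times as the index of the $n$-th success in an infinite sequence of i.i.d.\ Bernoulli$(p)$ trials, and then express the event $\{\sum X_i \ge k\}$ as a statement about the number of successes in a fixed window.

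Concretely, I would let $(Y_j)_{j\ge 1}$ be i.i.d.\ Bernoulli$(p)$ and define the successive success times $T_i=\min\{j:Y_1+\cdots+Y_j=i\}$, with $T_0=0$. The increments $X_i':=T_i-T_{i-1}$ are i.i.d.\ geometric with success probability $p$, so they have the same joint distribution as the $X_i$'s in the lemma. Under this coupling, $\sum_{i=1}^n X_i \overset{d}{=} T_n$, which is the trial of the $n$-th success.

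The heart of the proof is then the observation that \emph{the $n$-th success occurs no earlier than trial $k$ if and only if the first $k$ trials contain at most $n$ successes whenever one adopts the convention of counting the terminal success appropriately}. More precisely, $T_n \geq k$ iff the number of successes strictly before round $k$ is at most $n-1$, which is the negative-binomial/binomial identity $\Probc{T_n\ge k}=\Probc{\mathrm{Bin}(k-1,p)\le n-1}$; under the paper's (inclusive) convention this is restated as $\Probc{\mathrm{Bin}(k,p)\le n}$. This equivalence is a pointwise statement about the paths $(Y_j)$, so it holds before taking probabilities, and the identity of probabilities follows immediately.

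There is no real obstacle here beyond bookkeeping of the off-by-one: one must decide whether the geometric distribution is supported on $\{1,2,\dots\}$ or $\{0,1,\dots\}$ and whether the inequalities are strict, and then verify that the stated form matches. Once that is fixed, the proof is one line from the coupling. For the paper's application in the proof of \cref{lem:stabilization}, only the direction $\Probc{\sum X_i \ge k}\le \Probc{\mathrm{Bin}(k,p)\le n}$ (combined with a Chernoff bound on the binomial) is used, so any minor convention drift does not affect the downstream arguments.
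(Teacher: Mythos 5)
Your proof takes essentially the same route as the paper's one-line argument: both reinterpret $\sum_{i=1}^n X_i$ as the waiting time for the $n$-th success in an i.i.d.\ Bernoulli$(p)$ sequence and invoke the negative-binomial/binomial duality. You are in fact more careful than the paper, since the exact pointwise equivalence gives $\Probc{\sum_{i=1}^n X_i \geq k} = \Probc{\text{Bin}(k-1,p) \leq n-1}$ rather than the stated $\Probc{\text{Bin}(k,p) \leq n}$, and you correctly note that only the resulting upper bound $\Probc{\sum_{i=1}^n X_i \geq k} \leq \Probc{\text{Bin}(k,p) \leq n}$ is used in the application, so the convention drift is harmless.
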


\begin{proof}
Asking that $\sum_{i=1}^n X_i \geq k$ is like asking that, in $k$ Bernoulli trials, we have less than $n$ successes.
\end{proof}

\begin{theorem}[Chernoff's Inequality]
\label{thm:additive_chernoff}
Let $X=\sum_{i=1}^n X_i$, where $X_i$ with $i \in [n]$ are independently
distributed in $[0,1]$. Let $\mu=\Expc{X}$ and $\mu_- \leq \mu \leq \mu_+$.
Then:
\begin{enumerate}[(a)]
    \item for  every  $t>0$, it holds
    \[
    \Probc{X>\mu_+ +t}\leq  e^{-2t^2/n}  \quad \text{and} 
 \quad \Probc{X<\mu_- -t}\leq e^{-2t^2/n} ;
    \]
    \item for every $0<\epsilon<1$, it holds
    \[
    \Probc{X>(1+\epsilon)\mu_+}\leq e^{-\frac{\epsilon^2}{3}\mu_+} 
    \ \text{ and } \
    \Probc{X<(1-\epsilon)\mu_-}\leq e^{-\frac{\epsilon^2}{2}\mu_-} . 
    \]
\end{enumerate}
\end{theorem}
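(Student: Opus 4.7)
The plan is to use the exponential moment (Chernoff) method. For any $s>0$ and $a\in\mathbb{R}$, Markov's inequality applied to $e^{sX}$ yields
\[
    \Probc{X>a} \leq e^{-sa}\,\Expc{e^{sX}} = e^{-sa}\prod_{i=1}^n \Expc{e^{sX_i}},
\]
where independence of the $X_i$ justifies the factorization. The entire proof reduces to bounding each $\Expc{e^{sX_i}}$ appropriately and then optimizing over $s$. The two parts of the theorem correspond to two different MGF bounds: a Hoeffding-type quadratic bound for (a), and the sharper Bennett-type bound for (b).

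For part (a) I would first establish Hoeffding's lemma: for $X_i\in[0,1]$ with $\Expc{X_i}=p_i$,
\[
    \Expc{e^{s(X_i - p_i)}} \leq e^{s^2/8}.
\]
This follows from the convexity bound $e^{sx}\le (1-x)+xe^s$ on $[0,1]$; taking expectations and analyzing $\phi(s):=\ln(1-p+pe^s)-sp$ via a second-order Taylor expansion around $s=0$ (using $\phi(0)=0$, $\phi'(0)=0$, and $\phi''(s)\le 1/4$ for every $s$) gives the estimate. Combined with independence, $\Expc{e^{s(X-\mu)}}\le e^{ns^2/8}$, and using $\mu\le\mu_+$ we obtain $\Probc{X>\mu_++t}\le e^{-st+ns^2/8}$. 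Optimizing with $s=4t/n$ produces $e^{-2t^2/n}$. The lower tail is proved symmetrically by applying the same argument to $-X$ (or equivalently to $1-X_i$, which also lies in $[0,1]$).

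For part (b) I would use the sharper MGF bound $\Expc{e^{sX_i}}\le e^{p_i(e^s-1)}$, obtained from $e^{sx}\le 1+x(e^s-1)$ on $[0,1]$ together with $1+y\le e^y$. By independence, $\Expc{e^{sX}}\le e^{\mu(e^s-1)}$. Using $\mu\le\mu_+$ in the upper-tail case and Markov,
\[
    \Probc{X>(1+\varepsilon)\mu_+}\le \exp\!\bigl(\mu_+\bigl(e^s-1-s(1+\varepsilon)\bigr)\bigr).
\]
Setting $s=\ln(1+\varepsilon)$ recovers the classical Chernoff expression $\bigl(e^\varepsilon/(1+\varepsilon)^{1+\varepsilon}\bigr)^{\mu_+}$, and then the analytic inequality $(1+\varepsilon)\ln(1+\varepsilon)\ge \varepsilon+\varepsilon^2/2-\varepsilon^3/6$ (or the more standard $(1+\varepsilon)\ln(1+\varepsilon)-\varepsilon\ge \varepsilon^2/2$ for sufficiently small $\varepsilon$) yields $e^{-\varepsilon^2\mu_+/2}$ in the intended regime. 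The lower tail is handled analogously, taking $s<0$ and choosing $s=\ln(1-\varepsilon)$, after using $\mu\ge\mu_-$.

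The main obstacle is the passage from the exact Chernoff expression to the clean exponential form $e^{-\varepsilon^2\mu_+/2}$: the stated constant $1/2$ in the exponent is tight only for sufficiently small $\varepsilon$, and pinning it down requires careful analytic estimates on $(1+\varepsilon)\ln(1+\varepsilon)-\varepsilon$. The other delicate step is Hoeffding's lemma for part (a), whose technical core is the second-derivative bound $\phi''(s)\le 1/4$; everything else in both parts is a routine application of Markov together with an optimization over the free parameter $s$.
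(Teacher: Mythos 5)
The paper states this Chernoff bound in its Tools appendix as a standard fact, without proof, so there is no in-paper argument to compare against; your proposal has to stand on its own. Part (a) does: Hoeffding's lemma via the convexity bound $e^{sx}\le(1-x)+xe^{s}$ on $[0,1]$ and the estimate $\phi''(s)\le 1/4$, followed by independence, the monotonicity step $\mu\le\mu_+$, Markov, and the optimization $s=4t/n$, is the standard route and yields exactly $e^{-2t^2/n}$; the symmetric treatment of the lower tail is likewise correct. The same goes for the lower tail of part (b), where $s=\ln(1-\varepsilon)$ together with the true inequality $(1-\varepsilon)\ln(1-\varepsilon)+\varepsilon\ge\varepsilon^{2}/2$ (check: the difference vanishes at $0$ and has nonnegative derivative $-\ln(1-\varepsilon)-\varepsilon$) delivers $e^{-\varepsilon^{2}\mu_-/2}$.

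The gap is in the upper tail of part (b). After choosing $s=\ln(1+\varepsilon)$ you need $(1+\varepsilon)\ln(1+\varepsilon)-\varepsilon\ge\varepsilon^{2}/2$, and this is false for \emph{every} $\varepsilon>0$, not merely for large $\varepsilon$: setting $f(\varepsilon)=(1+\varepsilon)\ln(1+\varepsilon)-\varepsilon-\varepsilon^{2}/2$ one has $f(0)=0$ and $f'(\varepsilon)=\ln(1+\varepsilon)-\varepsilon<0$, so $f<0$ on $(0,\infty)$; equivalently the expansion is $\varepsilon^{2}/2-\varepsilon^{3}/6+O(\varepsilon^{4})$, strictly below $\varepsilon^{2}/2$. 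Your route therefore yields only $\exp\bigl(-\mu_+\bigl((1+\varepsilon)\ln(1+\varepsilon)-\varepsilon\bigr)\bigr)\le e^{-\varepsilon^{2}\mu_+/(2+\varepsilon)}$ (hence $e^{-\varepsilon^{2}\mu_+/3}$ for $\varepsilon\le1$), and your hedge ``for sufficiently small $\varepsilon$'' does not close this, since the constant $1/2$ is not attained for any $\varepsilon>0$. Indeed the first inequality of part (b) is false as printed for large $\varepsilon$ (take $X$ close to Poisson with $\mu=\mu_+=1$ and $\varepsilon=10$: the left side is about $10^{-9}$ while $e^{-50}\approx 2\cdot10^{-22}$), so no proof can recover it verbatim; the statement should carry the standard denominator $2+\varepsilon$ or a restriction on $\varepsilon$. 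For the paper's applications this is immaterial---they only need some bound of the form $e^{-c\varepsilon^{2}\mu_+}$ with $\varepsilon=O(1)$---but as a derivation of the displayed theorem, the upper-tail half of (b) is not established, and you should either weaken the constant or prove the corrected form $e^{-\varepsilon^{2}\mu_+/(2+\varepsilon)}$.
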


\end{document}